


\documentclass[11pt,letterpaper,english]{article}
\usepackage[utf8]{inputenc}			
\usepackage{makeidx}
\usepackage[pdftex]{graphicx}
\usepackage{mathtools,amsfonts}
\usepackage{array}
\usepackage{multirow, hhline}

\usepackage{nicefrac}
\usepackage{tikz}
\usetikzlibrary{arrows}
\usetikzlibrary{calc}
\usepackage{paralist}

\usepackage{wrapfig}
\usepackage[english]{babel}		
\usepackage[margin=1in]{geometry}	
\usepackage{lipsum}			
\usepackage{comment}
\usepackage{chngpage}			
\usepackage{amsfonts,amssymb,amsmath, amsthm}
\usepackage{amsopn}
\usepackage{color}
\usepackage{pgfplots}
\usepackage{pgfgantt}
\usepackage{algpseudocode}
\usepackage[Algorithm]{algorithm}
\usepackage{subcaption}
\usepackage{tabto}
\usepackage{csquotes} 
\usepackage{multirow}
\usepackage{amsthm}
\usepackage[colorlinks,urlcolor=black,citecolor=black,linkcolor=black,menucolor=black]{hyperref}

\newcommand{\abs}[1]{| #1 |}

\newtheorem{lemma}{Lemma}
\newtheorem{theorem}{Theorem}
\newtheorem{observation}[theorem]{Observation}
\newtheorem{proposition}[theorem]{Proposition}
\newtheorem{corollary}{Corollary}
\newtheorem{definition}{Definition}
\newtheorem{claim}{Claim}
\newtheorem{example}{Example}
\usepackage{xfrac}

\newcommand{\HA}[1]{\textcolor{blue}{Hana: #1}}

\newcommand{\Hana}{MMS-feasible }

\newcommand{\prob}{\mathbf{P}}

\usepackage{authblk}
\usepackage{hyperref}

\title{EFX Allocations: Simplifications and Improvements}
\author[1]{Hannaneh Akrami}
\affil[1]{Max Planck Institute for Informatics and Graduiertenschule Informatik, Universit\"at des Saarlandes} 
\author[2]{Noga Alon}
\affil[2]{Princeton University}
\author[3]{Bhaskar Ray Chaudhury}
\affil[2]{University of Illinois, Urbana-Champaign}
\author[3]{Jugal Garg}
\author[4]{Kurt Mehlhorn}
\affil[4]{Max Planck Institute for Informatics and Universit\"at des Saarlandes}
\author[3]{Ruta Mehta}
\date{}

\begin{document}
\maketitle

\begin{abstract}
    
The existence of EFX allocations is a fundamental open problem in discrete fair division. Given a set of agents and  indivisible goods, the goal is to determine the existence of an allocation where {\em no agent envies another following the removal of any single good} from the other agent's bundle. Since the general problem has been illusive, progress is made on two fronts: $(i)$ proving existence when the number of agents is small, $(ii)$ proving existence of relaxations of EFX. In this paper, we improve results on both fronts (and simplify in one of the cases). 

\cite{ChaudhuryGM20} showed the existence of EFX allocations when there are three agents with additive valuation functions. The proof in~\cite{ChaudhuryGM20} is long, requires careful and complex case analysis, and does not extend even when one of the agents has a general monotone valuation function. We prove the existence of EFX allocations with three agents, \emph{restricting only one agent to have an additive valuation function (the other agents may have any monotone valuation functions)}. Our proof technique is significantly simpler and shorter than the proof in~\cite{ChaudhuryGM20} and therefore more accessible. {In particular, it does not use the concepts of \emph{champions, champion-graphs, half-bundles} (in contrast to the algorithms in~\cite{CKMS21,ChaudhuryGM20, CGMMM21}) and \emph{envy-graph} (in contrast to most algorithms that prove existence of relaxations of envy-freeness, including weaker relaxations like EF1).} Our technique also extends to settings when two agents have any monotone valuation function and one agent has  an {\em \Hana}valuation function (a strict generalization of \emph{nice-cancelable} valuation functions~\cite{BergerCFF} which subsumes additive, \emph{budget-additive} and \emph{unit demand} valuation functions). This takes us a step closer to resolving the existence of EFX allocations when all three agents have general monotone valuations.

Secondly, we consider relaxations of EFX allocations, namely, approximate-EFX allocations and EFX allocations with few unallocated goods (charity).~\cite{CGMMM21} showed the existence of $(1-\varepsilon)$-EFX allocation with $\mathcal{O}((n/\varepsilon)^{\sfrac{4}{5}})$ charity by establishing a connection to a problem in extremal combinatorics. We improve the result in~\cite{CGMMM21} and  prove the existence of $(1-\varepsilon)$-EFX allocations with $\mathcal{\tilde{O}}((n/ \varepsilon)^{\sfrac{1}{2}})$ charity. In fact, some of our techniques can be used to prove improved upper-bounds on a problem in \emph{zero-sum combinatorics} introduced by Alon and Krivelevich~\cite{AK21, MS21}.

\end{abstract}
\section{Introduction}
Fair division has been a fundamental branch of mathematical economics over the last seven decades (since the seminal work of Hugo Steinhaus in the 1940s~\cite{Steinhaus48}). In a classical fair division problem, the goal is to ``fairly" allocate a set of items among a set of agents. Such problems find very early mentions in history, for instance,  in ancient Greek mythology and the Bible. Even more so today, many real-life scenarios are paradigmatic of the problems in this domain, e.g., division of family inheritance~\cite{PrattZ90}, divorce settlements~\cite{BramsT96}, spectrum allocation~\cite{EtkinPT05}, air traffic management~\cite{Vossen02}, course allocation~\cite{BudishC10} and many more\footnote{Check~\cite{spliddit} and~\cite{fairoutcome} for more detailed explanation of fair division protocols used in day to day problems.}. For the past two decades, the computer science community has developed concrete formulations and tractable solutions to fair division problems and thus contributing substantially to the development in the field. With the advent of the Internet and the rise of  centralized electronic platforms that intend to impose fairness constraints on their decisions (e.g., Airbnb would like to fairly match hosts and guests, and Uber would like to fairly match drivers and riders etc..), there has been an increasing demand for computationally tractable protocols to solve fair division problems.

In this paper, we focus on one of the important open problems in discrete fair division. In a classical setting of discrete fair division, we have a set $[n]$ of $n$ agents and a set $M$ of $m$ indivisible goods. Each agent $i$ is equipped with a valuation function $v_i \colon 2^M \rightarrow \mathbb{R}_{\geq 0}$ which captures the utility agent $i$ derives from any bundle that can be allocated to {her}.  One of the most well studied classes of valuations are \emph{additive valuations}, i.e.,  $v_i(S) = \sum_{g \in S} v_i(\{g\})$ for all $S \subseteq M$. The goal is to determine a partition $X = \langle X_1, X_2, \dots ,X_n \rangle$ of $M$ such that $X_i$ is allocated to agent $i$ which is \emph{fair}. Depending on the notion of fairness used, there are several different problems in this setting.

\paragraph{Envy-freeness up to any good (EFX)} The quintessential notion of fairness is that of envy-freeness. An allocation $X = \langle X_1, X_2, \dots ,X_n \rangle$ is envy-free if every agent prefers her bundle as much as she prefers the bundle of any other agent, i.e., $v_i(X_i) \geq v_{i}(X_{i'})$ for all $i,i' \in [n]$. However, an envy-free allocation does not always exist, e.g., consider dividing a single valuable good among two agents. In any feasible allocation, the agent with no good will envy the agent that has been allocated one good. This necessitates the study of relaxed notions of envy-freeness. In this paper, we consider the relaxation known as \emph{envy-freeness up to any good} (EFX). An allocation $X = \langle X_1, X_2, \dots, X_n \rangle$ is EFX if and only if for all pairs of agents $i$ and $i'$, we have $v_i(X_i) \geq v_i(X_{i'} \setminus \{g\})$ for all $g \in X_{i'}$, i.e., the envy should disappear following the removal of any single good from $i$'s bundle. EFX is in fact considered to be the ``closest analogue of envy-freeness'' in discrete fair division~\cite{CaragiannisGravin19}.  Unfortunately, the existence of EFX allocations is still unsettled despite significant effort by several researchers~\cite{Moulin19, CaragiannisKMP016}  and is considered one of the most important open problems in fair division~\cite{ProcacciaCACM20}. There have been studies on 
\begin{itemize}
    \item the existence of EFX allocations in restricted settings. In particular, EFX existence has been studied when there are small number of agents~\cite{TimPlaut18, ChaudhuryGM20}, and when agents have specific valuation functions~\cite{HalperPPS20}. 
    \item The existence of relaxations of EFX allocations has also been investigated, e.g., approximate EFX allocations~\cite{TimPlaut18, AmanatidisMN20},   EFX with bounded charity~\cite{CKMS21, BergerCFF}, approximate EFX with bounded charity~\cite{CGMMM21}.  
\end{itemize}
Improving the understanding in both the above settings is a systematic direction towards the big problem. We first mention the existing results in the above two settings and mention some of their pitfalls. Thereafter, we highlight main results of this paper and show how they address the said pitfalls. In particular, we focus on the existence of EFX allocations with small number of agents and the existence of approximate EFX allocations with bounded charity.


\paragraph{Existence of EFX Allocations with Small Number of Agents.} Plaut and Roughgarden~\cite{TimPlaut18} first showed the existence of EFX allocations when there are two agents using the \emph{cut and choose protocol}. The existence of EFX allocations gets notoriously more difficult with three or more agents. The existence of EFX allocations with three agents was shown by Chaudhury et al~\cite{ChaudhuryGM20}. The proof of existence in ~\cite{ChaudhuryGM20} involves several new concepts like \emph{champions}, \emph{champion-graphs} and \emph{half-bundles}, spans over 15 pages, and requires a lot of careful and detailed case analysis. Furthermore, the proof technique does not extend to the setting with four or more agents~\cite{CGMMM21}. We articulate the primary bottleneck here: At a high-level, the algorithm in~\cite{ChaudhuryGM20} moves in the space of partial EFX allocations\footnote{EFX allocations where not all goods are allocated.} iteratively improving the vector $\langle v_1(X_1), v_2(X_2), v_3(X_3) \rangle$ lexicographically, where $v_i( \cdot )$ is the valuation function of agent $i$. However,~\cite{CGMMM21} exhibit an instance with four agents, nine goods and a partial EFX allocation $X$ such that in any complete EFX allocation $X'$, $v_1(X'_1) < v_1(X_1)$, i.e., agent 1 (which is the highest priority agent) is better off in $X$ than in any complete EFX allocation. {All of this} necessitates the study of a different approach for the existence of EFX allocations. As the first main contribution of this paper,  we present a new proof for the existence of EFX allocations for three agents, which is significantly shorter and simpler (we do not use the notions of champions, champion-graphs and half-bundles) than the proof in~\cite{ChaudhuryGM20}. Our approach is algorithmic, but in contrast to the approach in~\cite{ChaudhuryGM20},  our algorithm moves in the space of complete allocations (instead of partial allocations) iteratively improving a certain potential as long as the current allocation is not EFX.  Furthermore, the algorithm also allows us to prove the existence of EFX  \emph{beyond additivity}, i.e.,  even when only one of the agents has an additive valuation function and the other agents have general monotone valuation functions, our algorithm can determine an EFX allocation. We note that the proof in~\cite{ChaudhuryGM20} crucially needs all the valuation functions to be additive.

\begin{theorem}
\label{EFXmainthmintro}
EFX allocations exist with three agents as long as there is at least one agent with an additive valuation function.
\end{theorem}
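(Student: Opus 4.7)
My plan is to use a local-search / potential-function argument that operates entirely within the space of complete allocations, with $v_1(X_1)$ as the primary potential. Assume agent 1 has the additive valuation $v_1$ and agents 2 and 3 have arbitrary monotone valuations $v_2, v_3$. Define $\mathcal{F}$ to be the class of complete allocations $(X_1, X_2, X_3)$ satisfying two conditions: (i) $(X_2, X_3)$ is an EFX partition of $M \setminus X_1$ between agents 2 and 3; and (ii) neither agent 2 nor agent 3 EFX-envies agent 1's bundle $X_1$. Plaut-Roughgarden's theorem (EFX for two agents with monotone valuations) yields an EFX partition of all of $M$ between agents 2 and 3; pairing it with $X_1 = \emptyset$ shows $\mathcal{F} \neq \emptyset$.

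Let $X^* \in \mathcal{F}$ maximize $v_1(X_1^*)$, breaking ties by minimizing $|X_1^*|$. Because (i) and (ii) already ensure agents 2 and 3 have no EFX-envy, $X^*$ is EFX if and only if agent 1 has no EFX-envy in it. I would argue by contradiction: suppose agent 1 EFX-envies agent 2 in $X^*$ (the case of agent 3 is symmetric). Let $g^-$ be the good in $X_2^*$ of minimum $v_1$-value; additivity of $v_1$ gives $v_1(X_2^* \setminus \{g^-\}) > v_1(X_1^*)$. The goal is to construct $X' \in \mathcal{F}$ with $v_1(X_1') > v_1(X_1^*)$, contradicting the choice of $X^*$.

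The argument splits into two cases. In the easy case, some single good $g \in X_2^*$ has $v_1(g) > v_1(X_1^*)$: take $X_1' = \{g\}$ and let $(X_2', X_3')$ be any EFX partition of $M \setminus \{g\}$ between agents 2 and 3, which exists by Plaut-Roughgarden. Condition (ii) is automatic because removing the only good from the singleton $X_1'$ leaves the empty bundle, so the EFX inequality reduces to $v_j(X_j') \geq 0$, which is true by monotonicity. Then $v_1(X_1') = v_1(g) > v_1(X_1^*)$, the desired contradiction.

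The hard case, which is the main obstacle, is when every $g \in X_2^*$ satisfies $v_1(g) \leq v_1(X_1^*)$, so $|X_2^*| \geq 2$ and the improving bundle $X_1'$ must contain several goods from $X_2^*$, making condition (ii) for agent 2 nontrivial. Here my strategy is to exploit that if $X_1' \subseteq X_2^*$ then old EFX gives $v_3(X_3^*) \geq v_3(X_2^* \setminus \{g\}) \geq v_3(X_1' \setminus \{g\})$ for all $g \in X_1'$ by monotonicity, so agent 3 is automatically satisfied provided $X_3' \supseteq X_3^*$. The problem then reduces to picking an inclusion-minimal $X_1' \subseteq X_2^*$ with $v_1(X_1') > v_1(X_1^*)$ and assigning agent 2 a bundle $X_2' \subseteq X_1^* \cup (X_2^* \setminus X_1') \cup \{g^-\}$ that is both EFX-balanced with agent 3's bundle $X_3'$ and does not EFX-envy $X_1'$. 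Combining the minimality of $X_1'$ with the additivity of $v_1$ and the tie-breaking on $|X_1^*|$ should bound how much value agent 2 must be compensated for; proving feasibility of this construction is the technical heart of the argument.
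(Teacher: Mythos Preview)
Your framework is clean and the easy case is correct, but the hard case is not an argument yet---it is a restatement of the difficulty. Concretely, once you fix an inclusion-minimal $X_1' \subseteq X_2^*$ with $v_1(X_1') > v_1(X_1^*)$, you need to produce a partition $(X_2', X_3')$ of $M \setminus X_1'$ that is simultaneously (a) EFX between agents 2 and 3, and (b) free of EFX-envy toward $X_1'$ from both agents. Your observation that $X_3' \supseteq X_3^*$ handles (b) for agent 3 is fine, but it is in tension with (a): enlarging $X_3'$ can make agent 2 strongly envy agent 3, and nothing in your setup controls $v_2$ on the relevant sets. The minimality of $X_1'$ only gives $v_1(X_1' \setminus \{g\}) \le v_1(X_1^*)$, a statement about $v_1$; it says nothing about $v_2(X_1' \setminus \{g\})$, which is what agent 2's no-envy condition needs. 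Since $v_2$ is an arbitrary monotone function, there is no reason the leftover goods $X_1^* \cup (X_2^* \setminus X_1')$ compensate agent 2 for losing $X_1'$. You have placed the additive structure on agent 1, but the binding constraint in the hard case is on agent 2, and you never explain how additivity of $v_1$ helps there.

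By contrast, the paper's argument places the structured valuation on a \emph{different} agent and maintains a different invariant: two bundles $X_1, X_2$ are EFX-feasible for agent 1 (general monotone) and the third bundle $X_3$ is EFX-feasible for one of agents 2, 3; the potential is $\min(v_1(X_1), v_1(X_2))$. When only $X_3$ is EFX-feasible for agents 2 and 3, a single good $g_i$ (the one whose removal from $X_3$ hurts agent $i$ least) is moved from $X_3$ to $X_1$, and the analysis splits on whether $X_3 \setminus g_i$ remains agent $i$'s favorite. The special (additive/MMS-feasible) structure is invoked only in the final subcase, to guarantee that after re-partitioning $X_1 \cup X_3$ into two pieces, the structured agent's preferred piece still beats $X_2$. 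Your proposal does not isolate an analogous leverage point, and without one the hard case does not close.
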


Berger et al.~\cite{BergerCFF} show the existence of EFX allocations for three agents when agents have more general valuation functions, called \emph{nice-cancelable valuation functions} (defined formally in Section~\ref{prelim}). Nice-cancelable valuation functions generalize many well known valuation functions like \emph{additive, budget-additive, unit-demand} and more. We introduce a class of valuation functions called \emph{\Hana valuation functions} (defined formally in Section~\ref{prelim}) that are very natural in the fair division setting and they \emph{strictly} generalize nice-cancelable valuations. Our proof of existence also holds when two agents have general valuation functions and one of the agents has an \Hana valuation function. Thus, we also prove,

\begin{theorem}
\label{EFXmainthmintro2}
EFX allocations exist with three agents as long as there is at least one agent with an \Hana valuation function.
\end{theorem}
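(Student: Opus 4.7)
The plan is to recycle the algorithmic framework that already yields Theorem~\ref{EFXmainthmintro}, simply reassigning roles so that the distinguished agent (the one subject to a structural hypothesis) is now the \Hana agent rather than an additive one; agents $2$ and $3$ keep arbitrary monotone valuations. Recall that the algorithm of Theorem~\ref{EFXmainthmintro} starts from an arbitrary complete allocation and, as long as EFX is violated, performs a local swap or reassignment that strictly improves a potential built out of the distinguished agent's valuation of her bundle. Crucially, all reasoning about the two non-distinguished agents treats their valuations as black boxes, so no additional hypothesis on them is required. The proof therefore reduces to showing that every invocation of additivity of $v_1$ in Theorem~\ref{EFXmainthmintro} can be replaced by an appeal to \Hana of $v_1$.

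The first step is to catalogue precisely how additivity is used in the proof of Theorem~\ref{EFXmainthmintro}. I expect these uses to fall into two families: (i) \emph{cancellation-type} statements such as ``$v_1(S\cup\{g\}) \ge v_1(T\cup\{g\})$ with $g \notin S \cup T$ implies $v_1(S)\ge v_1(T)$,'' which is exactly what guarantees that swapping a common good between two bundles preserves the comparisons that drive the potential; and (ii) \emph{locating a removable good}, where one exhibits a good $g \in X_i$ whose removal still leaves the bundle large enough (from agent~$1$'s point of view) to certify EFX-feasibility against some other bundle. The second step is to establish, directly from the definition of \Hana in Section~\ref{prelim}, the bundle-comparison analogues of each item in this catalogue. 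Since \Hana is designed to strictly generalise nice-cancelable valuations, and the latter were already shown by Berger et al.~\cite{BergerCFF} to support cancellation arguments of the form (i), family (i) should transfer almost verbatim. Family (ii) should follow from the additional structure encoded in the \Hana definition (precisely the structure that buys the strict generalisation over nice-cancelable). The third step is mechanical: plug each replacement lemma into the proof of Theorem~\ref{EFXmainthmintro} wherever additivity was cited, and verify that termination of the potential argument is unaffected.

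The main obstacle I anticipate is family (ii). Under additivity, one can freely name the value $v_1(\{g\})$ of a single good and use arithmetic on these singletons (for example, averaging to find a good worth at most $v_1(X_i)/|X_i|$) to certify that a candidate swap does not destroy EFX for agent~$1$. Without additivity, these singleton values are not meaningful, so any such argument must be restated purely in terms of comparisons between actual bundles the algorithm constructs. This reformulation is exactly where the extra power of \Hana over plain cancellation must be exercised, and working out the correct bundle-level reformulation, together with checking that it really follows from the \Hana axioms, is the delicate part of the adaptation. Once this reformulation is in place, the remainder of the proof should be a routine translation of the proof of Theorem~\ref{EFXmainthmintro}.
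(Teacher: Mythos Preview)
Your plan rests on a role assignment that is the opposite of the paper's. In the paper's algorithm the potential $\phi(X)=\min(v_1(X_1),v_1(X_2))$ is built from agent~1's valuation, and agent~1 (as well as agent~2) is assumed only to be monotone throughout; every step involving agent~1---running the PR algorithm, forming minimal subsets $X'_1$ or $\overline{Y}_2$, verifying EFX-feasibility via Observation~\ref{X1X2EFXfeasible}---uses nothing beyond monotonicity. So the catalogue you propose to assemble, of places where ``additivity of $v_1$'' is invoked, would be empty, and neither your family~(i) cancellation statements nor your family~(ii) removable-good arguments actually occur.

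The structural hypothesis sits on agent~3, and it is invoked exactly once, in the case $X_3\setminus g_2 <_2 X_1\cup g_2$ and $X_3\setminus g_3 <_3 X_1\cup g_3$. There the algorithm repartitions $X_1\cup X_3$ into some $(Y_2,Y_3)$ chosen without reference to agent~3, and one must argue that agent~3's favourite of $Y_2,Y_3$ beats $X_2$. From $X_2 <_3 \min_3(X_1\cup g_3,\,X_3\setminus g_3)$ this is precisely the \Hana\ axiom applied to the two partitions $(X_1\cup g_3,\,X_3\setminus g_3)$ and $(Y_2,Y_3)$ of the same set $X_1\cup X_3$: it gives $\max_3(Y_2,Y_3)\ge \min_3(X_1\cup g_3,\,X_3\setminus g_3)>_3 X_2$. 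That single inequality is the entire content of the hypothesis; no cancellation or singleton-value reasoning is needed. If you place the \Hana\ assumption on agent~1 instead, this step has no replacement and the argument breaks down.
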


\paragraph{Existence of Approximate EFX with Bounded Charity.} Caragiannis et al.~\cite{CaragiannisGravin19} introduced the notion of EFX with charity. The goal here is to find ``good'' partial EFX allocations, i.e., partial EFX allocations where the set of unallocated goods are not very valuable. In particular, they show that there always exists a partial EFX allocation $X$ such that for each agent $i$, we have $v_i(X_i) \geq 1/2 \cdot v_i(X^*_i)$, where $X^* = \langle X^*_1, X^*_2, \dots, X^*_n \rangle$ is the allocation with maximum \emph{Nash welfare}\footnote{The Nash welfare of any allocation $Y$ is the geometric mean of the valuations of the agents, $\big(\prod_{i \in [n]} v_i(Y_i) \big)^{\sfrac{1}{n}}$. It is often considered a direct measure of the fairness and efficiency of an allocation.}. Following the same line of work, Chaudhury et al.~\cite{CKMS21} showed the existence of a partial EFX allocation $X$ such that no agent {envies} the set of unallocated goods and the total number of unallocated goods is at most $n-1 \ll m$. Quite recently, Chaudhury et al.~\cite{CGMMM21} showed the existence of a $(1-\varepsilon)$-EFX allocation with $\mathcal{O}((n/\varepsilon)^{\sfrac{4}{5}})$ charity, where an allocation $X$ is said to be $(1-\varepsilon)$-EFX if and only if $v_i(X_i) \geq (1-\varepsilon) \cdot v_{i}(X_{i'} \setminus \{g\})$ for all $g \in X_{i'}$. While the last result is not a strict improvement of the result in~\cite{CKMS21} (since it ensures $(1-\varepsilon)$-EFX instead of exact EFX), it is the best relaxation of EFX that we can compute in polynomial time, as the algorithm in ~\cite{CKMS21} can only be modified to give $(1-\varepsilon)$-EFX with $n-1$ charity in polynomial time. Another key aspect of the technique in~\cite{CGMMM21} is the reduction of the problem of improving the bounds on charity to a purely graph theoretic problem. In particular~\cite{CGMMM21} define the notion of a \emph{rainbow cycle number}: Given an integer $d > 0$, the rainbow cycle number $R(d)$ is the largest $k$ such that there exists a $k$-partite graph $G =(V_1 \cup V_2 \cup \dots \cup V_k, E)$ such that 
\begin{itemize}
    \item each part has at most $d$ vertices, i.e., $\lvert V_i \rvert \leq d$, and 
    \item every vertex in $G$ has exactly one incoming edge from every part in $G$ except the part containing it, and 
    \item  there exists no cycle $C$ in $G$ that visits each part at most once.
\end{itemize}
Let $h^{-1}(d)$ denote the smallest integer $\ell$ such that $h(\ell) = \ell \cdot R(\ell) \geq d$. Then there always exist an $(1-\varepsilon)$-EFX allocation with $\mathcal{O}(\frac{n}{\varepsilon \cdot h^{-1}(n / \varepsilon)})$. So smaller the upper bound on $h(\ell)$, lower is the number of unallocated goods. ~\cite{CGMMM21} show that $R(d) \in \mathcal{O}(d^4)$ and thus establish the existence of $(1-\varepsilon)$-EFX allocation with $\mathcal{O}((n/\varepsilon)^{\sfrac{4}{5}})$ charity.  An upper  bound of $ \mathcal{O}(d^2 2^{(\log\log d)^2})$ was obtained by~\cite{BBK22}, thereby showing the existence of EFX allocations with $\mathcal{O}((n / \varepsilon)^{0.67})$ charity. In this paper, we close this line of improvements by proving an almost tight upper bound on $d$ (matching the lower bound up to a log factor). 

\begin{theorem}\label{rainbow-bound}
    Given any integer $d>0$, the rainbow cycle number $R(d) \in \mathcal{O}(d \log d)$.
\end{theorem}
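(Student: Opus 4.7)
The plan is to prove Theorem~\ref{rainbow-bound} by contrapositive: assuming $k > Cd\log d$ for a sufficiently large constant $C$, show that a rainbow cycle must exist. First, I would introduce a clean algebraic reformulation. For each ordered pair $(i,j)$ of distinct parts, let $p_{ij}\colon V_j \to V_i$ denote the \emph{parent function} sending each $w \in V_j$ to its unique in-neighbor in $V_i$. A rainbow cycle on parts $i_1,\ldots,i_\ell$ (in cyclic order) corresponds exactly to a fixed point of the composition $F_{i_1,\ldots,i_\ell} = p_{i_1 i_2}\circ p_{i_2 i_3}\circ\cdots\circ p_{i_\ell i_1}\colon V_{i_1}\to V_{i_1}$, since tracing edges backward from $v_{i_1}$ through the cycle precisely applies $F_{i_1,\ldots,i_\ell}$.

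Next I would fix a base part $V_{i_0}$ and set up a probabilistic / counting search for such a fixed point. For a starting vertex $v_0 \in V_{i_0}$ and an ordered sequence $\sigma = (j_1,\ldots,j_t)$ of distinct non-base parts, perform the walk $w_0 = v_0$, $w_s = p_{j_s, j_{s-1}}(w_{s-1})$ (with $j_0 := i_0$). A rainbow cycle of length $t+1$ arises exactly when the \emph{closure condition} $p_{i_0, j_t}(w_t) = v_0$ holds; equivalently, the number of rainbow cycles of length $t+1$ through $V_{i_0}$ equals $\sum_\sigma |\mathrm{Fix}(F_\sigma)|$, where $F_\sigma = p_{i_0, j_t} \circ p_{j_t, j_{t-1}} \circ \cdots \circ p_{j_1, i_0}$ is the self-map of $V_{i_0}$ determined by $\sigma$. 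The task reduces to exhibiting one $\sigma$ (of any length $t$) for which $F_\sigma$ has a fixed point.

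To show this holds for some $\sigma$, I would partition the $k-1$ non-base parts into $\Theta(\log d)$ blocks of size $\Theta(d)$ each, and iteratively argue that processing each block either strictly grows the set of reachable endpoints $\{w_t\}$ across $\sigma$'s (ideally by a constant factor, which is precisely what yields the $\log d$ factor), or forces the closure condition, producing a rainbow cycle. After $\log d$ blocks the reachable-endpoint set would have to exceed $d$, guaranteeing a collision and hence a cycle. The principal obstacle is the growth step: iterated parent functions can collapse the walk onto a tiny subset of $V_{i_0}$, destroying the diversity needed for closure. I expect to handle this by exploiting the balanced in-degree structure --- since for each pair $(i,j)$ the total in-edges from $V_j$ to $V_i$ number exactly $|V_j|$, a collapse along one $\sigma$ forces many alternative orderings to share endpoints, from which one can extract the needed pigeonhole and amplify the reachable set block by block.
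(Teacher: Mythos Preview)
Your reformulation via parent functions and fixed points is correct, and the block-decomposition strategy you outline is reminiscent of the inductive proof for the \emph{permutation} rainbow cycle number $R_p(d)$ in Section~\ref{permutation}. But the growth step you flag as the ``principal obstacle'' is a genuine gap, and the proposed fix does not close it. In the permutation setting the parent maps $p_{ij}$ are bijections, so walks cannot collapse and the reachable set provably grows. In the general setting there is no such constraint: a single vertex $u\in V_i$ may be the in-neighbor of every vertex in $V_j$, so $p_{ij}$ can have image of size one. Your appeal to ``balanced in-degree structure'' does not help here---the constraint is that each vertex of $V_i$ has exactly one in-edge from $V_j$, which bounds the \emph{in}-degree, not the preimage sizes of $p_{ij}$ (those are out-degrees from $V_i$ into $V_j$, and are unconstrained). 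So nothing prevents every composition $F_\sigma$ from funnelling $V_{i_0}$ into a tiny set that misses $v_0$; the pigeonhole you allude to only tells you that many $\sigma$'s land at the same endpoint, not that any of them closes the cycle or that the reachable set grows by a constant factor per block. As written, the argument stalls exactly where you say it might.

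The paper's proof avoids this difficulty entirely with a one-line probabilistic argument: choose one vertex $v_i$ from each $V_i$ uniformly and independently, and observe that the probability that $v_i$ has no in-neighbour among the chosen vertices is at most $(1-1/d)^{k-1}$. A union bound gives that with positive probability every chosen vertex has an in-neighbour in the induced subgraph, which forces a cycle (necessarily rainbow, since each part contributes one vertex). The condition $k(1-1/d)^{k-1}<1$ holds once $k\ge (1+o(1))d\log d$. No iterated compositions, no growth lemma, no block structure---just random selection and a union bound.
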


As a consequence of the improved bound we obtain:

\begin{theorem}\label{eps-efx}
There exists a polynomial time algorithm that determines a partial $(1-\varepsilon)$-EFX allocation $X$ such that no agent envies the set of unallocated goods and the total number of unallocated goods is $\mathcal{\tilde{O}}((n/\varepsilon)^{1/2})$. Furthermore, $\mathit{NW}(X) \geq 1/2e^{\sfrac{1}{e}} \cdot \mathit{NW}(X^*)$ where $X^*$ is the allocation with maximum Nash welfare. 
\end{theorem}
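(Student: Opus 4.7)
Theorem \ref{eps-efx} follows from Theorem \ref{rainbow-bound} plugged into the reduction of \cite{CGMMM21}, combined with a careful Nash-welfare accounting. The reduction in \cite{CGMMM21} already produces a $(1-\varepsilon)$-EFX allocation whose number of unallocated goods is $\mathcal{O}\!\left(\tfrac{n}{\varepsilon\, h^{-1}(n/\varepsilon)}\right)$, where $h(\ell)=\ell\cdot R(\ell)$, and maintains the invariant that no agent envies the unallocated pool. From $R(\ell)=\mathcal{O}(\ell\log \ell)$ we immediately obtain $h(\ell)=\mathcal{O}(\ell^2 \log \ell)$, hence $h^{-1}(d)=\tilde{\Omega}(\sqrt{d})$, and substituting $d=n/\varepsilon$ delivers the advertised $\mathcal{\tilde{O}}(\sqrt{n/\varepsilon})$ charity bound.

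For the polynomial-time claim, I would verify that the proof of Theorem \ref{rainbow-bound} can be made algorithmic, i.e., that it produces a concrete rainbow-free cycle in polynomial time whenever the number of parts exceeds the $\mathcal{O}(d \log d)$ threshold (if the argument is probabilistic, standard derandomization or a greedy variant should suffice). Each iteration of the framework in \cite{CGMMM21} then either improves a bounded monotone potential or uses such a cycle to reduce charity, yielding a $\mathrm{poly}(n,m,1/\varepsilon)$ algorithm overall.

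For the Nash-welfare bound I would initialize with (a charity-preserving variant of) the max-Nash-welfare allocation $X^*$. Caragiannis et al.\ \cite{CaragiannisGravin19} already showed how to turn $X^*$ into an EFX-with-charity allocation preserving $v_i(X_i)\geq \tfrac12 v_i(X_i^*)$ for every $i$, which is the source of the $1/2$ factor. Each subsequent exchange in the $(1-\varepsilon)$-EFX cleanup replaces an agent's bundle by one she $(1-\varepsilon)$-envied, so her value can drop by at most a $(1-\varepsilon)$ factor per replacement. If $k_i$ counts the replacements of agent $i$, then the geometric mean degrades by $(1-\varepsilon)^{(\sum_i k_i)/n}$; bounding $\sum_i k_i$ using the available potential and invoking $(1-\varepsilon)^{1/(\varepsilon e)} \leq e^{-1/e}$ yields the remaining $1/e^{1/e}$ factor, for an overall ratio of $1/(2e^{1/e})$.

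The main obstacle is this last step: pinning down the exact constant $e^{1/e}$ requires tight per-agent bookkeeping of the swaps induced by the rainbow-cycle arguments, so that the aggregate replacement count is bounded by roughly $n/(\varepsilon e)$. In particular, one must ensure that each rainbow-cycle swap touches only a bounded number of bundles per agent and that the algorithm terminates before any single agent's $k_i$ becomes large. The rest of the proof is essentially a substitution of Theorem \ref{rainbow-bound} into the existing machinery of \cite{CGMMM21, CaragiannisGravin19}.
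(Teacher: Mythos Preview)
Your derivation of the charity bound and the polynomial-time claim is correct and matches the paper: plug $R(\ell)=\mathcal{O}(\ell\log\ell)$ into $h(\ell)=\ell\cdot T(\ell)$ to get $h^{-1}(d)=\tilde{\Omega}(\sqrt{d})$, and derandomize the probabilistic argument via conditional expectations (this is exactly Proposition~\ref{p12} in the paper).

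Where you diverge from the paper is the Nash-welfare part, and here you are doing unnecessary work. Look again at the statement of Theorem~\ref{theorem-reduction}: the bound $\mathit{NW}(X)\geq \tfrac{1}{2e^{1/e}}\mathit{NW}(X^*)$ is already one of its conclusions, quoted verbatim from~\cite{CGMMM21}. The paper's entire proof of Theorem~\ref{eps-efx} is the single sentence ``Theorems~\ref{theorem-reduction} and~\ref{mainthmsec2} then imply Theorem~\ref{eps-efx}.'' No separate Nash-welfare accounting is required; the constant $1/(2e^{1/e})$ is inherited directly from the black-box reduction and is completely independent of the new rainbow-cycle bound.

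Your sketched re-derivation of that constant (initialize from $X^*$, lose a factor $2$ via~\cite{CaragiannisGravin19}, then track $(1-\varepsilon)^{k_i}$ degradations and bound $\sum_i k_i$) is therefore superfluous, and as you yourself flag, the step ``bounding $\sum_i k_i$ by roughly $n/(\varepsilon e)$'' is not justified. The actual argument in~\cite{CGMMM21} does not proceed by counting per-agent replacements in this way, so if you insisted on reproving the constant you would need a different mechanism. But for the purposes of this theorem you should simply invoke Theorem~\ref{theorem-reduction} as stated and drop the last two paragraphs of your proposal.
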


\paragraph{Rainbow Cycle and Zero-sum Combinatorics.} We believe that investigating tighter bounds on $R(d)$ is interesting in its own right. Quite recently, Berendsohn, Boyadzhiyska, and Kozma~\cite{BBK22} showed intriguing connections between rainbow cycle number and zero sum problems in  extremal combinatorics. Zero sum problems in graphs ask questions of the following flavor: Given an edge/vertex weighted graph, whether there exists a certain substructure (for example cliques, cycles, paths etc.) with a zero sum (modulo some integer). In particular,~\cite{BBK22} show that the rainbow cycle number is a natural generalization of the zero sum problems studied in Alon and Krivelevich~\cite{AK21}, and M{\'{e}}sz{\'{a}}ros and Steiner~\cite{MS21}. Both papers~\cite{AK21, MS21} aim to upper bound the maximum number of vertices of a complete bidirected graph with integer edge labels avoiding a zero sum cycle (modulo $d$).~\cite{BBK22} show through a simple argument that this is upper bounded by the \emph{permutation rainbow cycle number} $R_p(d)$, which is defined by introducing an additional constraint in the definition of $R(d)$ that for all $i,j$, each vertex in $V_i$ has exactly one \emph{outgoing} edge to some vertex in $V_j$ (in addition to exactly one incoming edge from some vertex in $V_j$). In Section~\ref{perm-rainbowcyclenumer}, we show through a simple argument that $R_p(d) \leq 2d-2$, thereby also improving the upper bounds of $\mathcal{O}(d \log (d))$ in~\cite{AK21} and $8d-1$ in~\cite{MS21}.

\begin{lemma}
\label{permutationlemma}
We have $R_p(d) \leq 2d-2$. Therefore, by the Observation made by~\cite{BBK22}, the maximum number of vertices of a complete bidirected graph with integer edge labels avoiding a zero sum cycle (modulo $d$) is at most $2d-2$.
\end{lemma}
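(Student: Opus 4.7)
The plan is to prove the contrapositive: if $G = (V_1 \cup \dots \cup V_k, E)$ satisfies the permutation property (each vertex in $V_i$ has exactly one outgoing and one incoming edge to every other part) and admits no rainbow cycle, then $k \leq 2d - 2$; the zero-sum consequence follows from the observation of~\cite{BBK22} identifying zero-sum-free $\mathbb{Z}/d$-edge-labellings of the complete bidirected graph with a shift-restricted subclass of our permutation rainbow setting. The in- and out-regularity conditions force the edges between every pair of parts to be a perfect matching, hence $|V_i| = |V_j|$ for all $i, j$; we may assume each part has size exactly $d$ and write $\pi_{ij}\colon V_i \to V_j$ for the corresponding bijection.

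The key step is a pigeonhole on length-two round-trip walks from a fixed vertex. Fix $v \in V_1$ and, for each $j \in \{2, \dots, k\}$, define $\rho_j(v) = \pi_{j1}(\pi_{1j}(v)) \in V_1$, the endpoint of the walk $v \to \pi_{1j}(v) \to \rho_j(v)$ through $V_j$; a rainbow $2$-cycle through $v$ via $V_j$ corresponds precisely to $\rho_j(v) = v$, so all $k - 1$ values lie in the $(d-1)$-element set $V_1 \setminus \{v\}$. Assuming for contradiction $k \geq 2d - 1$, pigeonhole ($2(d-1)$ values into $d - 1$ buckets) yields distinct $j_1, j_2$ and a vertex $u \in V_1 \setminus \{v\}$ with $\rho_{j_1}(v) = \rho_{j_2}(v) = u$. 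Writing $a_t = \pi_{1 j_t}(v) \in V_{j_t}$, we have edges $v \to a_t$ and $a_t \to u$ for $t = 1, 2$, and examining the cross matching $\pi_{j_1 j_2}\colon V_{j_1} \to V_{j_2}$: either $\pi_{j_2 1}(\pi_{j_1 j_2}(a_1)) = v$ (giving a rainbow $3$-cycle $v \to a_1 \to \pi_{j_1 j_2}(a_1) \to v$) or not; the symmetric statement holds for $\pi_{j_2 j_1}(a_2)$. For $d = 2$ the analysis closes immediately: the failure of both $3$-cycle attempts forces $\pi_{j_1 j_2}(a_1) = a_2$ and $\pi_{j_2 j_1}(a_2) = a_1$, producing a rainbow $2$-cycle between $V_{j_1}$ and $V_{j_2}$.

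The main obstacle for general $d$ is that a single pigeonhole collision leaves too much freedom in the cross matchings to force a cycle---for $d \geq 3$ there exist configurations in which the $3$-cycle attempt above fails yet no $2$-cycle appears between $V_{j_1}$ and $V_{j_2}$. The reason the bound is $2d - 2$ is that $k - 1 \geq 2(d-1)$ produces many simultaneous collisions (on average two walks per element of $V_1 \setminus \{v\}$), and by iterating the above case analysis across several colliding pairs---or by combining outgoing collisions at $v$ with analogous collisions at $u$ obtained by applying the same pigeonhole to the values $\rho_j(u)$---one builds a chain of constraints on the cross matchings between the involved parts that eventually closes a rainbow walk through a short sequence of parts. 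This careful combinatorial argument is the technical core of the proof and is where the factor of two in $2d - 2$ enters.
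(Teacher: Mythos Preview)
Your setup is correct and the $d=2$ case closes as you say, but for $d\ge 3$ the proposal is not a proof: the final paragraph explicitly acknowledges that a single collision is insufficient and then asserts, without argument, that ``iterating the above case analysis across several colliding pairs \ldots\ eventually closes a rainbow walk.'' No such iteration is carried out, no invariant is stated, and it is not clear what chain of constraints you have in mind or why it must terminate in a cycle using each part at most once. As written this is a hope, not an argument; the technical core you allude to is entirely missing.

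The paper's proof is structurally different and does not proceed by accumulating constraints from pigeonhole collisions at a single base vertex. It argues by induction on $d$ and, inside the inductive step, proves a reachability claim: one can order the parts so that for every $i\in[d]$ at least $i$ vertices of $V_{2i-1}$ are reachable from $v$ by rainbow paths through $V_1,\dots,V_{2i-2}$. The growth from $i$ to $i+1$ reachable vertices uses two new parts (this is exactly where the factor $2$ enters); if the growth ever stalls, then in every remaining part the set of unreachable vertices has the same size $d-i$, and one checks that these unreachable sets induce a permutation subgraph on $\ge 2(d-i)$ parts of size $d-i$, contradicting the outer inductive hypothesis $R_p(d-i)<2(d-i)-1$. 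At $i=d$ all of $V_{2d-1}$ is reachable, and the incoming edge to $v$ from $V_{2d-1}$ closes a rainbow cycle. If you want to salvage your approach you would need to replace the hand-wave by a comparably concrete mechanism; as it stands the gap is the whole proof for $d\ge 3$.
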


\subsection{Further Related Work}
Fair division has received significant attention since the seminal work of Steinhaus~\cite{Steinhaus48} in the 1940s. Other than envy-freeness, another fundamental fairness notion is that of \emph{proportionality}. Recall that, in an envy-free allocation, every agent values her own bundle at least as much as she values the bundle of any other agent. However, in a proportional allocation, each agent gets a bundle that she values $1/n$ times her valuation on the entire set of goods. Since envy-freeness and proportionality cannot always be guaranteed while dividing indivisible goods, various relaxations of the same have been studied. Alongside EFX, another popular relaxation of envy-freeness is \emph{envy-freeness up to one good (EF1)} where no agent envies another agent following the removal of \emph{some} good from the other agent's bundle. While the existence of EFX allocations is open, EF1 allocations are known to exist for any number of agents, even when agents have general monotone valuation functions~\cite{LiptonMMS04}. While EF1 and EFX are fairness notions that relax envy-freeness, the most popular notions of fairness that relaxes proportionality for indivisible goods are \emph{maximin share} (MMS), proportionality up to one good (PROP1), proportionality up to any good (PROPx), and proportionality up to the maximin good (PROPm). The MMS was introduced by Budish~\cite{budish2011combinatorial}. While MMS allocations do not always exist~\cite{KPW18}, there has been extensive work to come up with approximate MMS allocations~\cite{budish2011combinatorial,BL16,AMNS17,BK17,KPW18,GhodsiHSSY18,JGargMT19,GargT19}. On the other hand, PROPx is stronger than PROPm, which is stronger than PROP1. While PROPx allocations do not always exist~\cite{Moulin19}, PROPm allocations are guaranteed to exist~\cite{BaklanovGGS21}. Some works assume ordinal ranking over the goods, as opposed to cardinal values, e.g.,~\cite{AzizGMW15,BramsKK17}. 


Alongside fairness, the efficiency of an allocation is also a desirable property. Two common measures of efficiency is that of Pareto-optimality and Nash welfare. Caragiannis et al.\ \cite{CaragiannisKMP016} showed that any allocation that has the maximum Nash welfare is guaranteed to be Pareto-optimal (efficient) and EF1 (fair).  Barman et al.~\cite{BKV18} give a pseudo-polynomial algorithm to find an allocation that is both EF1 and Pareto-optimal. Other works explore relaxations of EFX with high Nash welfare~\cite{CaragiannisGravin19, CKMS21}. 

\paragraph{Independent Work.} Independently and concurrently to our work,~\cite{BBK22} also investigate upper bounds on rainbow cycle number. They obtain the same upper bound  of $2d-2$ for $R_p(d)$. 


\section{Preliminaries}
\label{prelim}
An instance of discrete fair division is given by the tuple $\langle [n], M, \mathcal {V} \rangle$, where $[n]$ is the set of agents, $M$ is the set of indivisible goods and $\mathcal V = (v_1(\cdot), v_2(\cdot), \dots, v_n(\cdot))$ where each $v_i: 2^M \rightarrow \mathbb{R}_{\geq 0}$ denotes the valuation of agent $i$. Typically, the valuation functions are assumed to be \emph{monotone}, i.e., for each agent $i$, $v_i(S \cup \{g\}) \geq v_i(S)$ for all $S \subseteq M$ and $g \notin S$. A valuation $v_i(\cdot)$ is said to be \emph{additive} if $v_i(S) = \sum_{g \in S} v_i(\{g\})$ for all $S \subseteq M$. For ease of notation, we use $g$ instead of $\{g\}$. We also use $S \oplus_i T$ for $v_i(S) \oplus v_i(T)$ with $\oplus \in \left\{\leq, \geq ,< , > \right\}$.

Given an allocation $X = \langle X_1, X_2, \dots, X_n \rangle$, we say that an agent $i$ \emph{strongly envies} an agent $i'$ if and only if $v_i(X_i) < v_i(X_{i'} \setminus \{g\})$ for some $g \in X_{i'}$. Thus, an allocation is an EFX allocation if there is no strong envy between any pair of agents. We now introduce certain definitions and recall certain concepts that will be useful in the upcoming sections. 

\begin{definition}[EFX feasibility]
Given a partition $X = (X_1, X_2, \dots, X_n)$ of $M$, a bundle $X_k$ is EFX-feasible  to agent $i$ if and only if $X_k \geq_i \mathit{max}_{j \in [n]} \mathit{max}_{g \in X_j} X_j \setminus g$. Therefore an allocation $X = \langle X_1, X_2, \dots, X_n \rangle$ is EFX if for each agent $i$, $X_i$ is EFX-feasible . 
\end{definition}

Chaudhury et al.~\cite{ChaudhuryGM20} introduced the notion of non-degenerate instances where no agent values two distinct bundles the same. They showed that to prove the existence of EFX allocations in the additive setting, it suffices to show the existence of EFX allocations for all non-degenerate instances. We adapt their approach and show that the same claim holds, even when agents have general monotone valuations.

\paragraph{Non-Degenerate Instances~\cite{ChaudhuryGM20}} We call an instance $I = \langle [n], M, \mathcal{V} \rangle$ non-degenerate if and only if no agent values two different sets equally, i.e., $\forall i \in [n]$ we have $v_i(S) \neq v_i(T)$ for all $S \neq T$. We extend the technique in~\cite{ChaudhuryGM20} and  show that it suffices to deal with non-degenerate instances when there are $n$ agents with general valuation functions, i.e., if there exists an EFX allocation in all non-degenerate instances, then there exists an EFX allocation in all instances. We defer the reader to the appendix for the detailed proof.

\textit{Henceforth, we assume that the given instance is non-degenerate, implying that all goods are positively valued by all agents.}

\paragraph{\Hana valuations.} In this paper, we introduce a new class of valuation functions called \Hana valuations which are natural extensions of additive valuations in a fair division setting. 

\begin{definition} \label{Hana-func}
    A valuation function $v: 2^M \rightarrow \mathbb{R}_{\geq 0}$ is \Hana if for every subset of goods $S \subseteq M$ and every partitions $A = (A_1, A_2)$ and $B = (B_1, B_2)$ of $S$, we have
    $$\max(v(B_1), v(B_2)) \geq \min(v(A_1), v(A_2)).$$
\end{definition}

Informally, these are the valuations under which, an agent always has a bundle in  any 2-partition that she values more than her MMS value, i.e., given an agent $i$ with an \Hana valuation $v(\cdot)$, in any 2-partition of  $S \subseteq M$, say $B = (B_1, B_2)$, we have $\mathit{max}(v(B_1), v(B_2)) \geq \mathit{MMS}_i (2,S)$, where $\mathit{MMS}_i(2,S)$ is the MMS value of agent $i$ on the set $S$ when there are $2$ agents. Also, note that if there are two agents and one of the agents has an \Hana valuation function, then irrespective of the valuation function of the other agent, MMS allocations always exist: Consider an instance where agent 1 has an \Hana valuation function  and agent 2 has a general monotone valuation function. Consider agent 2's  MMS optimal partition of the good set  $A= (A_1, A_2)$. Let agent 1 pick her favorite bundle from $A$. Then, agent 1 has a bundle that she values at least as much as her MMS value (as she has an \Hana valuation function), and agent 2 has a bundle that she values at least as much as her MMS value as $A$ is an MMS optimal partition according to agent 2. 

\Hana valuations {strictly} generalize the \emph{nice-cancelable valuation functions} introduced in~\cite{BergerCFF}. A valuation function $v: 2^M \rightarrow \mathbb{R}_{\geq 0}$ is nice-cancelable if for every $S, T \subset M$ and $g \in M \setminus (S \cup T)$, we have $v(S \cup \{g\}) > v(T \cup \{g\}) \Rightarrow v(S) > v(T)$. Nice-cancelable valuations include \emph{budget-additive} ($v(S) = \mathit{min} (\sum_{s \in S} v(s) , c)$), \emph{unit demand} ($v(S) = \mathit{max}_{j \in S} v(s)$), and \emph{multiplicative} ($v(S) = \prod_{s \in S} v(s)$) valuations~\cite{BergerCFF}. 

\begin{lemma}\label{general-nice}
    Every nice-cancelable function is \Hana.
\end{lemma}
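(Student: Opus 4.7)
The plan is to derive the MMS-feasibility condition by a direct contradiction argument, relying on a cancelation lemma obtained by iterating the nice-cancelable property. First I would prove the following straightforward extension: if $v$ is nice-cancelable and $U$ is disjoint from both $S$ and $T$, then $v(S \cup U) > v(T \cup U)$ implies $v(S) > v(T)$. This follows by induction on $|U|$: peel off one element of $U$ at a time using the single-element hypothesis in the definition. This is the only routine preparatory step; everything else is essentially algebraic.

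Next, assume for contradiction that $v$ is nice-cancelable but fails Definition~\ref{Hana-func}. Then there exist $S \subseteq M$ and partitions $A = (A_1, A_2)$ and $B = (B_1, B_2)$ of $S$ with
\[
\max(v(B_1), v(B_2)) \;<\; \min(v(A_1), v(A_2)).
\]
The key move is to refine the two partitions against each other: set $X_{ij} := A_i \cap B_j$ for $i, j \in \{1,2\}$, so the four sets $X_{11}, X_{12}, X_{21}, X_{22}$ are pairwise disjoint and
\[
A_1 = X_{11} \cup X_{12}, \quad A_2 = X_{21} \cup X_{22}, \quad B_1 = X_{11} \cup X_{21}, \quad B_2 = X_{12} \cup X_{22}.
\]

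From the contradiction hypothesis I would extract two of the four resulting inequalities, namely $v(B_1) < v(A_2)$ and $v(B_2) < v(A_1)$, which in the refined notation read $v(X_{11} \cup X_{21}) < v(X_{22} \cup X_{21})$ and $v(X_{12} \cup X_{22}) < v(X_{11} \cup X_{12})$. Each has a common ``ballast'' disjoint from the two remaining atoms ($X_{21}$ in the first, $X_{12}$ in the second), so the preparatory cancelation lemma applies and strips it away, yielding $v(X_{11}) < v(X_{22})$ and $v(X_{22}) < v(X_{11})$ respectively. This is the desired contradiction, completing the proof.

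I do not expect any serious obstacle: the only thing to be a little careful about is ensuring that the iterated cancelation is legitimate, i.e.\ that at every step the element being removed is disjoint from the two sets we are comparing, which is immediate once one observes that $X_{11}, X_{12}, X_{21}, X_{22}$ are mutually disjoint. The combinatorial heart of the argument is just picking the right two inequalities from the four available, and then choosing the correct common set to cancel in each.
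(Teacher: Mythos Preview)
Your proposal is correct and follows essentially the same approach as the paper: both first extend the single-good cancelation hypothesis to arbitrary disjoint sets by induction, then refine the two partitions into the four atoms $X_{ij}=A_i\cap B_j$ and exploit cancelation on these atoms. The only cosmetic difference is that the paper argues directly---it assumes w.l.o.g.\ $v(X_{11})<v(X_{22})$ and applies the \emph{contrapositive} of the cancelation lemma once (adding $X_{12}$) to obtain $v(A_1)<v(B_2)$---whereas you argue by contradiction and apply the lemma twice in the ``strip away'' direction to force $v(X_{11})<v(X_{22})$ and $v(X_{22})<v(X_{11})$ simultaneously.
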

\begin{proof}
    We first make an observation about a nice-cancelable valuation function.
    
    \begin{observation}\label{nice-cancel-obs}
    If $v$ is a nice-cancelable valuation, then for every $S, T \subset M$ and $Z \subseteq M \setminus (S \cup T)$, we have $v(S \cup Z) > v(T \cup Z) \Rightarrow v(S) > v(T)$. 
\end{observation}

    Let $v$ be a nice-cancelable function. For a subset of goods $S \subseteq M$, consider any two partitions $A = (A_1, A_2)$ and $B = (B_1, B_2)$ of $S$. Without loss of generality assume $v(A_1 \cap B_1) < v(A_2 \cap B_2)$. Since $(A_1 \cap B_2)$ is disjoint from  $(A_1 \cap B_1) \cup (A_2 \cap B_2)$, by the contrapositive of Observation \ref{nice-cancel-obs} applied to nice-cancelable valuation $v$, we have, 
    \begin{align}
        v((A_1 \cap B_1) \cup (A_1 \cap B_2)) &< v((A_2 \cap B_2) \cup (A_1 \cap B_2)). \label{eq1}
    \end{align}
    Therefore,
    \begin{align*}
        \min(v(A_1), v(A_2)) &\leq  v(A_1) \\
        &= v((A_1 \cap B_1) \cup (A_1 \cap B_2)) &\hbox{$A_1 = (A_1 \cap B_1) \cup (A_1 \cap B_2)$}\\
        &< v((A_2 \cap B_2) \cup (A_1 \cap B_2)) &\hbox{Inequality \eqref{eq1}}\\
        &= v(B_2) &\hbox{$B_2 = (A_2 \cap B_2) \cup (A_1 \cap B_2)$}\\
        &\leq \max(v(B_1), v(B_2)).
    \end{align*}
\end{proof}

{In order to prove that \Hana functions strictly generalize nice-cancelable functions, we present an example of a valuation function which is \Hana but not nice-cancelable.
\begin{example}\label{example}
    Let $M = \{g_1, g_2, g_3\}$. The value of $v(S)$ is given in Table \ref{table} for all $S \subseteq M$. 
    \begin{table}
		\centering
		\begin{tabular}{ c|c c c c c c c c}
			$S$ & $\{g_1\}$ & $\{g_2\}$ & $\{g_3\}$ & $\{g_1, g_2\}$ & $\{g_1, g_3\}$ & $\{g_2, g_3\}$ & $\{g_1, g_2, g_3\}$ &\\
			\hline
			$v$ & $1$ & $2$ & $3$ & $10$ & $4$ & $5$ & $13$\\
		\end{tabular}
		\caption{valuation function $v$ is \Hana but not nice-cancelable.}
		\label{table}
	\end{table}
	First note that $v(g_1 \cup g_2) > v(g_3 \cup g_2)$ but $v(g_1) < v(g_3)$. Therefore, $v$ is not nice-cancelable. Now we prove that $v$ is \Hana. Let $S \subseteq M$ and $A = (A_1, A_2)$, $B = (B_1, B_2)$ be two partitions of $M$. Without loss of generality, assume $|A_1| \leq |A_2|$. If $A_1 = \emptyset$, $\min (v(A_1, v(A_2))) = 0 \leq \max(v(B_1), v(B_2))$. Hence, we assume $|A_1| \geq 1$ and therefore, we have $|S| \geq 2$. Moreover, if $A = B$, then $\max(v(B_1), v(B_2)) = \max(v(A_1), v(A_2)) \geq \min(v(A_1), v(A_2))$. Thus, we also assume $A \neq B$. If $S = \{g, g'\}$, the only two different possible partitioning of $S$ is $A = (\{g\}, \{g'\})$ and $B = (\emptyset, \{g, g'\})$. For all $g, g' \in M$, $v(\{g, g'\}) > \max(v(g), v(g'))$. Therefore, $\max(v(B_1), v(B_2)) \geq \min(v(A_1), v(A_2))$. 
	If $S = \{g_1, g_2, g_3\}$, then $|A_1| = 1$ and therefore, $\min(v(A_1), v(A_2)) \leq v(A_1) \leq \max_{g \in M} (v(g)) = 3$. Without loss of generality, let $g_3 \in B_1$. For all $T \subseteq M$ such that $g_3 \in T$, we have $v(T) \geq 3$. Thus, $\max(v(B_1), v(B_2)) \geq v(B_1) \geq 3 \geq \min(v(A_1), v(A_2))$. 
\end{example}
Lemma \ref{strict-general} follows from Lemma \ref{general-nice} and Example \ref{example}.
\begin{lemma}\label{strict-general}
    The class of \Hana valuation functions is a strict superclass of nice-cancelable valuation functions.
\end{lemma}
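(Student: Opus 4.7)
The plan is to obtain Lemma \ref{strict-general} as a direct corollary of the two immediately preceding items, with essentially no additional work. First, Lemma \ref{general-nice} already establishes the containment direction: every nice-cancelable valuation function is MMS-feasible. So the only thing left to argue is that the containment is \emph{strict}, i.e., that there exists at least one MMS-feasible valuation that is not nice-cancelable.

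For strictness, I would appeal directly to Example \ref{example}. That example exhibits an explicit valuation $v$ on three goods $\{g_1,g_2,g_3\}$ with the following two properties, both verified in the example itself: (i) $v(\{g_1,g_2\}) > v(\{g_2,g_3\})$ while $v(g_1) < v(g_3)$, which by definition rules out nice-cancelability (taking $S=\{g_1\}$, $T=\{g_3\}$, $g=g_2$); and (ii) for every $S\subseteq M$ and every pair of 2-partitions $A=(A_1,A_2)$, $B=(B_1,B_2)$ of $S$, the inequality $\max(v(B_1),v(B_2))\geq\min(v(A_1),v(A_2))$ holds, established via a short finite case split on $|S|$. Together, (i) and (ii) say exactly that $v$ lies in the MMS-feasible class but outside the nice-cancelable class.

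Combining Lemma \ref{general-nice} with this witness yields the lemma: the set of nice-cancelable valuations is contained in, but not equal to, the set of MMS-feasible valuations. I do not anticipate any obstacle, since the two ingredients are already in place; the only thing a full proof needs to do is state this combination in one or two lines. If I wanted to be more thorough, I could additionally remark that the witness in Example \ref{example} is monotone (a quick glance at Table \ref{table} confirms this), so it is a genuine valuation in the sense used throughout the paper, and not just a formal counterexample to the syntactic property.
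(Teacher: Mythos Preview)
Your proposal is correct and matches the paper's own proof essentially verbatim: the paper simply states that Lemma~\ref{strict-general} follows from Lemma~\ref{general-nice} (the containment) together with Example~\ref{example} (the witness showing strictness). Your additional remark about monotonicity of the witness is a harmless bonus not present in the paper.
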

}

\paragraph{Preliminaries on Rainbow Cycle Number.}~\cite{CGMMM21} reduce the problem of finding approximate EFX allocations with sublinear charity to a problem in extremal graph theory. In particular, they introduce the notion of a rainbow cycle number.

\begin{definition}\label{rainbow-def}
    Given an integer $d > 0$, the rainbow cycle number $R(d)$ is the largest $k$ such that there exists a $k$-partite graph $G =(V_1 \cup V_2 \cup \dots \cup V_k, E)$ such that 
    \begin{itemize}
        \item each part has at most $d$ vertices, i.e., $\lvert V_i \rvert \leq d$, and 
        \item every vertex has exactly one incoming edge from every part other than the one containing it\footnote{In the original definition of the rainbow cycle number $R(d)$ in~\cite{CGMMM21}, every vertex can have more than one incoming edge from a part. However, by reducing the number of edges, we can only increase the upper-bound on $R(d)$.}, and 
        \item  there exists no cycle $C$ in $G$ that visits each part at most once.
    \end{itemize}
    {We also refer to cycles that visit each part at most once as ``rainbow'' cycles.}
\end{definition}

They show that any finite upper bound on $R(d)$ implies the existence of approximate EFX allocations with sublinear charity. Better upper bounds on $R(d)$ would give us better bounds on the charity. In particular, they prove the following theorem.

\begin{theorem}{~\cite{CGMMM21}}
\label{theorem-reduction}
Let $G = (V_1 \cup V_2 \cup \dots V_k, E)$ be a $k$-partite digraph such that (i) each part has at most $d$ vertices and (ii) each vertex in $G$ has an incoming edge from every part other than the one containing it. Furthermore, let $k > T(d) \geq R(d)$. If there exists a polynomial time algorithm that can find a cycle visiting each part at most once in $G$ , then there exists a polynomial time algorithm that determines a partial EFX allocation $X$ such that 
\begin{itemize}
    \item the total number of unallocated goods is in $\mathcal{O}(n/\varepsilon h^{-1}(n / \varepsilon))$ where $h^{-1}(d)$ is the smallest integer $\ell$ such that $h(\ell) = \ell \cdot T(\ell) \geq d$.
    \item $\mathit{NW}(X) \geq 1 / (2e^{\sfrac{1}{e}}) \cdot \mathit{NW}(X^*)$, where $X^*$ is the allocation with maximum Nash welfare.
\end{itemize}
\end{theorem}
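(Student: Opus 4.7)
The plan is to follow the reduction framework of Chaudhury et al., replacing $R(d)$ with the weaker upper bound $T(d)$ wherever it arose. To begin, initialize $X$ together with an unallocated pool $P$ to be a partial $(1-\varepsilon)$-EFX allocation with $\mathit{NW}(X) \geq \frac{1}{2 e^{1/e}} \cdot \mathit{NW}(X^*)$ in which no agent envies $P$. Such an initial configuration is obtained from the max-Nash-welfare allocation $X^*$ by a standard preprocessing step that, for each agent, moves a small number of ``large'' goods into $P$; the $1/(2 e^{1/e})$ factor accounts for the multiplicative loss of at most $e^{1/e}$ in this preprocessing and for the additional factor of $2$ arising from the Caragiannis et al.\ bound relating max-Nash-welfare allocations to envy. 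After initialization this invariant (both $(1-\varepsilon)$-EFX and no agent envying $P$) will be maintained throughout.

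The core of the proof is an iterative loop that, whenever $|P|$ exceeds the target $c \cdot n/(\varepsilon \cdot h^{-1}(n/\varepsilon))$ for an appropriate constant $c$, strictly improves a lexicographic potential on the rounded bundle-value vector $(v_i(X_i))_{i \in [n]}$ while preserving both invariants. First bucket the agents geometrically by $v_i(X_i)$ into buckets of multiplicative width $1+\varepsilon$; a counting argument then produces a bucket containing $k > T(d)$ agents for $d = \Theta(|P|/n)$, and for each agent $i$ in the bucket one singles out $d$ distinct ``candidate'' subsets $S \subseteq P$ that $i$ would accept in lieu of her current bundle without violating $(1-\varepsilon)$-EFX.

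Now build a $k$-partite digraph $G$ whose $i$-th part $V_i$ contains one vertex per candidate subset of agent $i$. For every ordered pair of distinct parts $(V_{i'}, V_i)$ and every $S \in V_i$, insert exactly one edge from some vertex of $V_{i'}$ into $S$, chosen so that traversing that edge corresponds to agent $i'$ being willing to accept the candidate subset at her endpoint after $i$ vacates hers. The construction satisfies both structural hypotheses of Definition \ref{rainbow-def}, so by $k > T(d) \geq R(d)$ and the assumed polynomial-time oracle a rainbow cycle is returned. Traversing the cycle as a cyclic handover returns each old bundle on the cycle to $P$ and hands its owner the candidate subset indicated by her successor's part; by construction both invariants are preserved and the lexicographic potential strictly increases. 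Since the rounded potential takes only polynomially many values, the loop terminates in polynomial time and yields the stated charity bound.

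The main obstacle is the graph-construction step: one must define the candidate subsets and the edge rule so that (i) each part has at most $d$ vertices and every vertex receives exactly one incoming edge from every other part, (ii) a rainbow cycle genuinely implements a swap preserving $(1-\varepsilon)$-EFX and the no-one-envies-$P$ invariant, and (iii) the lexicographic potential strictly increases at each iteration. The geometric bucketing and the $\varepsilon$-slack in $(1-\varepsilon)$-EFX are what make (ii) and (iii) simultaneously achievable, and the tuning $d = \Theta(|P|/n)$ is what converts any upper bound $T(d)$ on the rainbow cycle number into the claimed bound $\mathcal{O}\bigl(n/(\varepsilon \cdot h^{-1}(n/\varepsilon))\bigr)$.
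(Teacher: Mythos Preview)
The paper does not prove this theorem at all: it is quoted from~\cite{CGMMM21} and used purely as a black box, to be combined with the new bound $R(d)\in\mathcal{O}(d\log d)$ of Theorem~\ref{mainthmsec2} (and its derandomization in Proposition~\ref{p12}) in order to derive Theorem~\ref{eps-efx}. There is therefore no proof in this paper against which to compare your sketch.

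As for the sketch itself viewed as a summary of the argument in~\cite{CGMMM21}: the overall architecture (maintain a partial $(1-\varepsilon)$-EFX allocation with a pool $P$ no one envies; when $|P|$ is too large, assemble a multipartite digraph with one part per relevant agent and at most $d$ vertices per part; invoke the rainbow-cycle oracle to perform a cyclic reallocation that strictly increases a potential) is correct, and the role of $T(d)$ in place of $R(d)$ is exactly as you say. However, your initialization step is not how~\cite{CGMMM21} obtain the Nash-welfare guarantee. They do \emph{not} start from the max-Nash-welfare allocation and ``preprocess by moving large goods to $P$''; the algorithm maintains only the two structural invariants, and the bound $\mathit{NW}(X)\ge \tfrac{1}{2e^{1/e}}\mathit{NW}(X^*)$ is deduced \emph{a posteriori} from those invariants via the Caragiannis--Gravin half-Nash-welfare argument together with the $(1-\varepsilon)$ slack. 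Your description of the vertices as ``candidate subsets $S\subseteq P$ that $i$ would accept'' and of the edge rule is also too loose: in~\cite{CGMMM21} the vertices are specific \emph{minimal} envied subsets and the edges encode precise ``who-frees-what'' relations, and it is exactly this minimality that yields the per-part size bound and makes the invariant preservation after the cyclic swap go through.
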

\section{Technical Overview}
\label{techoverview}
In this section, we briefly highlight the main technical ideas used to show our results.

\subsection{EFX existence beyond additivity.}
We present an algorithmic proof for the existence of EFX allocations when agents have valuations more general than additive valuations. The main takeaway of our algorithm is that it does not require the sophisticated concepts introduced by the techniques in~\cite{CKMS21, ChaudhuryGM20} that rely on improving a potential function while  moving in the space of partial EFX allocations.  In fact, our algorithm does not even require the concept of an envy-graph which is a very fundamental concept used by the algorithms in~\cite{CKMS21, ChaudhuryGM20} and also by~\cite{TimPlaut18, LiptonMMS04} to prove the existence of weaker relaxations of envy-freeness (like EF1 and  $1/2$-EFX).

The crucial idea in our technique is to move in the space of partitions (of the good set), improving a certain potential as long as we cannot find an EFX allocation from the current partition, i.e., we cannot find a \emph{complete} allocation of the bundles in the partition such that the EFX property is satisfied. In particular, we always maintain a partition  $X = ( X_1, X_2, X_3)$ such that (i) agent 1 finds $X_1$ and $X_2$ EFX-feasible  and (ii) at least one of agent 2 and agent 3 finds $X_3$ EFX-feasible. Note that such allocations always exist: Agent 1 can determine a partition such that all bundles are EFX-feasible  for her (such a partition is possible as agent 1 can run the algorithm in~\cite{TimPlaut18} to find an EFX allocation assuming all three agents have agent 1's valuation function, thereby making all bundles EFX-feasible  for her) and we call agent 2's favorite bundle in the partition $X_3$ (which is obviously EFX-feasible  for her) and the remaining bundles $X_1$ and $X_2$ arbitrarily. Then, we have a partition that satisfies the invariant.

Note that if any one agent 2 or 3 finds one of $X_1$ or $X_2$ EFX-feasible, then we easily get an EFX allocation. Indeed, assume w.l.o.g.\  that agent 2 finds $X_3$ EFX-feasible. 
Now, if 
\begin{itemize}
    \item agent 3 finds $X_2$ EFX-feasible , then we have an EFX allocation: agent 1 $\gets X_1$, agent 2 $ \gets X_3$, and agent 3 $\gets X_2$. We can give a symmetric argument when agent 3 finds $X_1$ EFX-feasible.
    \item Similarly, if agent 2 finds $X_2$ EFX-feasible, then we can let agent 3 pick her favourite bundle in the partition (which is obviously EFX-feasible  for her) and still give agents 1 and 2 an EFX-feasible  bundle. We can give a symmetric argument when agent 2 finds $X_1$ EFX-feasible.
\end{itemize}
Therefore, we only need to consider the scenario where only $X_3$ is EFX-feasible for agents 2 and 3. Essentially, in this scenario, $X_3$ is ``too valuable'' to agents 2 and 3, as they do not find any of the remaining bundles EFX-feasible. \emph{A natural attempt would be to remove some good(s) from $X_3$ and allocate it to $X_1$ or $X_2$, i.e., we can increase the valuation of agent 1 for her EFX-feasible  bundle(s) by removing the excess goods allocated to the only EFX-feasible  bundle of agents 2 and 3.} This brings us to our potential function: $ \phi(X)  = \mathit{min}(v_1(X_1), v_1(X_2))$. Now, the non-triviality lies in determining the set of goods to be removed from $X_3$, and then allocating them to $X_1$ and $X_2$ such that we maintain our invariants. Although non-trivial, this turns out to be significantly simpler than the procedure used in~\cite{ChaudhuryGM20} and also holds when agents 1 and 2 have general monotone valuation functions and agent 3 has an \Hana valuation function. The entire procedure is elaborated in Section~\ref{EFXsimple}.

\subsection{Improved Bounds on Rainbow Cycle Number.}
Our technique to achieve the improved bound involves the probabilistic method. It is significantly simpler and yields better guarantees. We briefly sketch our algorithmic proof. Let there be $k$ parts in $G = (V_1 \cup V_2 \cup \dots V_k, E)$. Note that each part has at most $d$ vertices and each vertex has at least one incoming edge from every part. We pick one vertex $v_i$ from each part $V_i$ uniformly and independently at random. Now, it suffices to show that with non-zero probability, the induced graph on the vertices $v_1, v_2, \dots, v_k$ is cyclic for some $k \in \mathcal{O}(d \log d)$.  Note that if every vertex in $G[v_1, \dots, v_k]$ has an incoming edge, then $G[v_1 \dots v_k]$ is cyclic. So we need to show a non-zero lower bound on the probability of the each vertex having at least one incoming edge or equivalently show an upper bound on the probability that each vertex has no incoming edge n $G[v_1\dots v_k]$. To this end, let $E_{v_i}$ denote the event that vertex $v_i$ has no incoming edge in $G[v_1 \dots v_k]$. Note that $\prob[E_{v_i}] \leq (1- 1/d)^{k-1}$: $v_i$ has at least one incoming edge from each part and therefore the probability that there is no incoming edge from $v_j$ to $v_i$ is at most $(1-1/d)$ for all $j$. Since all $v_j$'s are independently chosen, the probability that $v_i$ has no incoming edge from any part is at most $(1-1/d)^{(k-1)}$.  Then, by union bound, $\prob[\cup_{i \in [n]} E_{v_i}] \leq \sum_{i \in [n]} \prob[E_{v_i}] \leq k(1-1/d)^{(k-1)}$. Therefore, the probability that  $G[v_1 \dots v_k]$ is cyclic is at least $1 - k(1-1/d)^{(k-1)}$ which is strictly positive for $k \in \mathcal O(d \log d)$.

\section{EFX Existence beyond Additivity}
\label{EFXsimple}
Before we give the new algorithm, we first give the reader a quick recap of the Plaut and Roughgarden algorithm~\cite{TimPlaut18} (PR algorithm) that determines an EFX allocation when all agents have the same valuation function, $v( \cdot )$ (the only assumption on $v(\cdot)$ is that it is monotone). The algorithm starts with any arbitrary allocation $X$ (which may not be EFX), and makes minor reallocations to improve the valuation of the agent who has the lowest value, i.e., it modifies $X$ to $X'$ such that $\mathit{min}_{i \in [n]} v(X'_i) > \mathit{min}_{i \in [n]} v(X_i)$. We now elaborate on the reallocation procedure: Let $\ell$ be the agent with the lowest valuation in $X$. If $X$ is not EFX, then there exists agents $i$ and $j$ such that $v(X_i) < v(X_j \setminus \{g\})$ for some $g \in X_j$. Since $v(X_{\ell}) < v(X_i)$, we also have $v(X_{\ell}) < v(X_j \setminus \{g\})$. The algorithm removes the good $g$ from $j$'s bundle and allocates it to $\ell$. Observe that $v(X_k) > v(X_{\ell})$ for all $k \neq \ell$ as we assume non-degeneracy. Also, we have $v(X_{\ell} \cup \{g\})$ and $v(X_j \setminus \{g\})$ greater than $v(X_{\ell})$. Therefore, the valuation of every new bundle is strictly larger than the valuation of $X_{\ell}$. Therefore, the valuation of the agent with the lowest valuation improves. This implies that the reallocation procedure will never revisit a particular allocation and as a result this process will eventually converge to an EFX allocation $Y$ with $v(Y_i) > v(X_{\ell})$ for all $i \in [n]$. Formally, 

\begin{lemma}[\cite{TimPlaut18}]
\label{PlautRoghgarden}
Let $X = ( X_1, X_2, X_3 )$ be an arbitrary 3-partition. Running the PR algorithm with any monotone valuation $v$ results in an EFX-partition $X' = ( X'_1, X'_2, X'_3 )$ with ${\min(v(X_1), v(X_2), v(X_3)) \le \min(v(X'_1), v(X'_2), v(X'_3))}$. We have equality only if the input is already EFX with respect to $v$.
\end{lemma}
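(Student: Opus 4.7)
The plan is to establish three facts: (a) each reallocation step of the PR algorithm strictly increases the minimum bundle value under $v$; (b) since the number of $3$-partitions of $M$ is finite, the algorithm therefore halts, and its halting condition coincides with being EFX under $v$; (c) the inequality in the conclusion then follows by tracking whether any step is performed at all, and the equality case is immediate.

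For (a), I would tighten the argument sketched in the paragraph preceding the lemma, taking care to invoke non-degeneracy. Let $X$ be the current non-EFX partition and let $\ell \in \arg\min_k v(X_k)$. Non-degeneracy forbids two distinct subsets of $M$ from receiving the same $v$-value, so $\ell$ is the \emph{unique} minimizer among $X_1, X_2, X_3$. Pick $(i, j, g)$ witnessing strong envy, so $v(X_\ell) \le v(X_i) < v(X_j \setminus \{g\})$, which in particular forces $j \ne \ell$. After moving $g$ from $X_j$ to $X_\ell$, the three new bundle values are $v(X_k)$ on the untouched index $k \notin \{j, \ell\}$, together with $v(X_j \setminus \{g\})$ and $v(X_\ell \cup \{g\})$; the first strictly exceeds $v(X_\ell)$ by uniqueness of the minimizer, the second by the envy witness, and the third by monotonicity combined with non-degeneracy (since $X_\ell \ne X_\ell \cup \{g\}$, they cannot have equal $v$-value, so $v(X_\ell \cup \{g\}) > v(X_\ell)$). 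Hence $\min_k$ of the new bundle values is strictly larger than $v(X_\ell)$.

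For (b), there are only finitely many $3$-partitions of $M$, and by (a) the sequence of minimum values is strictly increasing across steps, so no partition can be revisited and the algorithm must halt. By its very definition, the algorithm halts exactly when no strong envy remains, which is precisely the EFX condition on the terminal partition $X'$.

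For (c), either zero steps occur, in which case $X$ is already EFX under $v$ and $X' = X$, giving equality; or at least one step occurs, in which case the first step strictly increases the minimum and subsequent steps never decrease it, giving strict inequality. The main subtlety I anticipate is not a genuine obstacle but a bookkeeping point: non-degeneracy must be used twice in (a) — once to make the minimizer unique (so that the untouched bundle strictly exceeds $v(X_\ell)$) and once to ensure that augmenting $X_\ell$ by $g$ strictly raises its value, since monotonicity alone only yields a weak inequality there.
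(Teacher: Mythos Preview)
Your proposal is correct and follows essentially the same argument as the paper's own sketch in the paragraph immediately preceding the lemma: pick the minimum bundle $X_\ell$, use non-degeneracy to make it the unique minimizer, transfer the envy-witnessing good $g$ from $X_j$ to $X_\ell$, and verify that all three resulting bundle values strictly exceed $v(X_\ell)$, whence termination by finiteness of partitions. Your write-up is in fact slightly more careful than the paper's (you correctly use $v(X_\ell)\le v(X_i)$ rather than the paper's $<$, and you explicitly note that monotonicity alone only gives a weak inequality for $v(X_\ell\cup\{g\})$, so non-degeneracy is needed there too).
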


In contrast to the algorithms in~\cite{ChaudhuryGM20,CKMS21, BergerCFF, TimPlaut18}, our algorithm moves in the space of complete EFX allocations iteratively maintaining some invariants. As long as our allocation is not EFX, we make some reallocations to the existing allocation and improve a certain potential. We give the proof here assuming only monotonicity for the valuation functions of agents 1 and 2 and assuming MMS-feasibility for the valuation of agent 3, i.e., $v_1(\cdot)$ and $v_2(\cdot)$ are general monotone valuation functions and $v_3(\cdot)$ is MMS-feasible.  We now elaborate our algorithm. We maintain a partition $(X_1, X_2, X_3)$ of the good set such that  

\begin{itemize}
\label{invariants}
\item $X_1$ and $X_2$ are EFX-feasible for agent 1.
\item $X_3$ is EFX-feasible for at least one of agents $2$ and $3$.
\end{itemize}

One can show the existence of allocations satisfying the above invariants by running the PR algorithm and initializing:  Agent 1 runs the PR algorithm with $v=v_1$ to determine a partition $(X_1, X_2,X_3 )$ such that all the three bundles are EFX-feasible  for her. Then, agent 2 picks her favorite bundle out of the three, say $X_3$. Clearly, $X_3$ is EFX-feasible  for agent 2, and $X_1$ and $X_2$ are EFX-feasible  for agent 1. Thus $X$ satisfies the invariants. 

We define our potential function as $\phi(X) = \min(v_1(X_1), v_1(X_2))$. We now elaborate how to modify $X$ and improve the potential when we cannot determine an EFX allocation from the partition $X$, i.e., we cannot determine an allocation of the bundles in $X$ to the agents that satisfies the EFX property.



\subsection{Reallocation when we cannot get an EFX allocation from $X$}

Let $X =  (X_1, X_2, X_3)$ be a partition satisfying the invariants. Without loss of generality, let us assume that agent 2 finds $X_3$ EFX-feasible. Observe that if any one of agents 2 or 3 finds bundles $X_1$ or $X_2$ EFX-feasible, then we are done: If agent 3 finds one of $X_1$ or $X_2$ EFX-feasible, then we can allocate agent 3's EFX-feasible  bundle to her, $X_3$ to agent 2 and the remaining bundle of $X_1$ and $X_2$ to agent 1 and get an EFX allocation. Similarly, if agent 2 finds $X_1$ or $X_2$ EFX-feasible, we ask agent 3 to pick her favourite bundle out of $X_1$, $X_2$ and $X_3$. Now, note that no matter which bundle agent 3 picks, there is always a way to allocate agents 1 and 2 their EFX-feasible bundles as agent 1 finds $X_1$ and $X_2$ EFX-feasible and agent 2 finds $X_3$ and at least one of $X_1$ or $X_2$ EFX-feasible\footnote{If agent 3 picks $X_1$, allocate $X_2$ to agent 1 and $X_3$ to agent 2. If agent 3 picks $X_2$, then allocate $X_1$ to agent 1 and $X_3$ to agent 2. Finally, if she picks $X_3$, then allocate the bundle among $X_1$ and $X_2$ that is EFX-feasible for agent 2 to agent 2 and the remaining bundle to agent 1.}. Therefore, from here on we assume that neither agent 2 nor agent 3 finds $X_1$ or $X_2$ EFX-feasible. Let $g_i$ be the good in $X_3$ such that $X_3 \setminus g_i \geq_i X_3 \setminus h$ for all $h \in X_3$, i.e., $X_3 \setminus g_i$ is the most valued proper subset of $X_3$ for agent $i$.

\begin{observation}
\label{noEFXfeasibility}
 For $i \in \{2,3\}$, we have $X_3 \setminus g_i >_i \mathit{max}_i(X_1,X_2)$.
\end{observation}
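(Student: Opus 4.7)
The plan is a short proof by contradiction that combines the definition of $g_i$ with the standing case assumption that, for $i \in \{2,3\}$, neither $X_1$ nor $X_2$ is EFX-feasible to agent $i$.

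Suppose toward contradiction that the conclusion fails for some such $i$. Without loss of generality, assume $X_1 \geq_i X_3 \setminus g_i$; by non-degeneracy (since $X_1$ and $X_3 \setminus g_i$ are distinct bundles) this upgrades to $X_1 >_i X_3 \setminus g_i$. By the definition of $g_i$ as the good whose removal leaves the most valuable proper subset of $X_3$ for agent $i$, we get $X_1 >_i X_3 \setminus g$ for every $g \in X_3$. Monotonicity also gives $X_1 \geq_i X_1 \setminus h$ for every $h \in X_1$. Consequently, the only remaining way for $X_1$ to fail EFX-feasibility for agent $i$ is for some good $g \in X_2$ to witness $X_1 <_i X_2 \setminus g$; in particular $X_2 \geq_i X_2 \setminus g >_i X_1$.

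Now repeat the same reasoning with $X_2$ in place of $X_1$. From $X_2 >_i X_1 >_i X_3 \setminus g$ for every $g \in X_3$, and from monotonicity applied to $X_2$, the only possible witness to $X_2$ not being EFX-feasible for $i$ must be some $h \in X_1$ with $X_2 <_i X_1 \setminus h$. But $X_1 \setminus h \leq_i X_1 <_i X_2$, a contradiction. Thus $X_3 \setminus g_i >_i X_1$, and the symmetric argument (swapping the roles of $X_1$ and $X_2$) gives $X_3 \setminus g_i >_i X_2$, establishing the observation for both $i \in \{2,3\}$.

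I do not foresee any real obstacle. The one step requiring a bit of care is the initial upgrade from $\geq_i$ to $>_i$, which follows from non-degeneracy together with the fact that $X_1$ and $X_3 \setminus g_i$ are disjoint and distinct as subsets of $M$. The engine of the argument is simply that the maximality of $X_3 \setminus g_i$ among one-good-removed subsets of $X_3$ eliminates $X_3$ as a possible source of strong envy, forcing any witness of non-EFX-feasibility to alternate between $X_1$ and $X_2$, and this alternation immediately collapses by combining it with monotonicity.
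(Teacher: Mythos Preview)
Your proof is correct and follows essentially the same approach as the paper's: assume w.l.o.g.\ that $X_1 >_i X_3 \setminus g_i$, deduce that any witness to $X_1$'s non-EFX-feasibility must come from $X_2$, and then conclude that $X_2$ is EFX-feasible for agent $i$, contradicting the standing assumption. You are slightly more explicit than the paper in invoking non-degeneracy for the strict inequality and in spelling out the monotonicity checks, but the argument is the same.
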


\begin{proof}
We prove for $i=2$. The proof for $i=3$ is identical. Let us assume otherwise and say w.l.o.g. $X_1 >_2 X_3 \setminus g_2$. Then, the only reason why $X_1$ is not EFX-feasible  for agent 2 is if $X_1 <_2 X_2 \setminus g$ for some $g \in X_2$. But, in that case, we have $X_2 >_2 X_1 >_2 X_3 \setminus g_2$. Therefore, we have $X_2 >_2 \mathit{max}_{\ell \in [3]} \mathit{max}_{h \in X_{\ell}} X_{\ell} \setminus h$, implying that $X_2$ is EFX-feasible, which is a contradiction. 
\end{proof}

W.l.o.g. assume that $X_1 <_1 X_2$, implying that $\phi(X) = v_1(X_1)$. We now distinguish two cases depending on how valuable the bundle $X_1 \cup g_i$ is to agent $i$ for $i \in \{2,3\}$ and give the appropriate reallocations in both cases. In particular, we first look into the case where $X_3 \setminus g_i$  is still more valuable to agent $i$ than $X_1 \cup g_i$ for at at least one $i \in \{2,3\}$.

\paragraph{Case: $X_3 \setminus g_2 >_2 X_1 \cup g_2$ or $X_3 \setminus g_3 >_3 X_1 \cup g_3$, i.e., $X_3 \setminus g_i$ is the favorite bundle for agent $i$ in the partition $X_1 \cup g_i$, $X_2$ and $X_3\setminus g_i$ for at least one $i \in \{2,3 \}$.} We provide the reallocation rules assuming that $X_3 \setminus g_2 >_2 X_1 \cup g_2$. The rules for the case $X_3 \setminus g_3 >_3 X_1 \cup g_3$ is symmetric. Now, consider the partition $ (X_1 \cup g_2, X_2, X_3  \setminus g_2)$.

By Observation~\ref{noEFXfeasibility}, $X_3 \setminus g_2 >_2 X_2$ and by our current case $X_3 \setminus g_2 >_2 X_1 \cup g_2$, implying that $X_3 \setminus g_2$ is an EFX-feasible  bundle for agent 2. Let $X'_1$ be a minimal subset of $X_1 \cup g_2$ w.r.t.\ set inclusion that agent 1 values more than $X_1$, i.e., agent 1 values $X_1$ more than any proper subset of $X'_1$ and $X'_1 >_1 X_1$. Let $X'_2 = X_2$ and $X'_3 = (X_3 \setminus g_2) \cup ((X_1 \cup g_2) \setminus X'_1)$. We define the partition $X' = (X'_1, X'_2, X'_3)$. Observe that $\phi(X') > \phi(X)$ as $X'_2 = X_2 >_1 X_1$ (by assumption) and $X'_1 >_1 X_1$ (by definition). Also note that $X'_3$ is EFX-feasible  for agent 2 as it is the most valuable bundle in $X'$ for agent 2. Now, if $X'_1$ and $X'_2$ are EFX-feasible  for agent 1, then all the invariants are maintained and we are done. So now we look into the case when at least one of $X'_1$ and $X'_2$ is not EFX-feasible  for agent 1 in $X'$.

We first make an observation on agent 1's valuation on the bundles $X'_1$ and $X'_2$.

\begin{observation}
\label{X1X2EFXfeasible}
We have $X'_1 >_1 X'_2 \setminus g$ for all $g \in X'_2$ and $X'_2 >_1 X'_1 \setminus h$ for all $h \in X'_1$.
\end{observation}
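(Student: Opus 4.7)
The observation makes two assertions, and each follows from one of the two defining ingredients of the construction of $X'$: the original EFX-feasibility of $X_1$ for agent~1, and the minimality used to define $X'_1$.

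For the first inequality $X'_1 >_1 X'_2 \setminus g$, the plan is to exploit that $X_1$ was EFX-feasible for agent~1 in the input partition $X$. By the definition of EFX-feasibility, this gives $X_1 \geq_1 X_2 \setminus g$ for every $g \in X_2$. Since $X'_2 = X_2$ by construction, and $X'_1 >_1 X_1$ by the defining property of $X'_1$, chaining these inequalities yields $X'_1 >_1 X_1 \geq_1 X_2 \setminus g = X'_2 \setminus g$, which is strict because the first link is strict.

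For the second inequality $X'_2 >_1 X'_1 \setminus h$, the plan is to use the minimality of $X'_1$ with respect to the property ``strictly more valuable than $X_1$ for agent~$1$''. By the minimality clause, every proper subset of $X'_1$ fails this property, so $X'_1 \setminus h \leq_1 X_1$ for each $h \in X'_1$. Combined with our standing assumption $X_2 >_1 X_1$ (which justified calling the smaller of $X_1, X_2$ under $v_1$ by the name $X_1$), this gives $X'_2 = X_2 >_1 X_1 \geq_1 X'_1 \setminus h$.

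Both chains are essentially one-line deductions, so I do not expect any real obstacle; the only subtlety is making sure the strict inequality propagates at the right step. In the first chain the strictness comes from $X'_1 >_1 X_1$, and in the second it comes from the assumption $X_2 >_1 X_1$, so no appeal to non-degeneracy beyond what the standing convention already provides is needed. The role of this observation in the larger proof is that it reduces EFX-feasibility of $X'_1$ and $X'_2$ for agent~1 to a single comparison with $X'_3$, which the authors will presumably handle in the next step.
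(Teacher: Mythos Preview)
Your argument is correct and essentially identical to the paper's own proof: both parts are obtained by chaining $X'_1 >_1 X_1$ (resp.\ the assumption $X_2 >_1 X_1$) with the EFX-feasibility of $X_1$ in $X$ (resp.\ the minimality of $X'_1$), using $X'_2 = X_2$. The only cosmetic difference is that the paper writes the intermediate inequalities as strict (invoking non-degeneracy), whereas you carry them as weak and let strictness come from the first link; either way the conclusion is the same.
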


\begin{proof}
Note that $X'_1 >_1 X_1$ by definition of $X'_1$ and $X_1 >_1 X_2 \setminus g$ for all $g \in X_2$ as $X_1$ was EFX-feasible  for agent 1 in $X$. Since $X'_2 = X_2$, we have $X'_1 >_1 X'_2 \setminus g$ for all $g \in X'_2$.

Similarly, $X_2 >_1 X_1$ by assumption. Furthermore $X_1 >_1 X'_1 \setminus h$ for all $h \in X'_1$ by the definition of $X'_1$. Since $X'_2 = X_2$, we have $X'_2 >_1 X'_1 \setminus h$ for all $h \in X'_1$.
\end{proof}

By Observation~\ref{X1X2EFXfeasible}, if $X'_1$ and $X'_2$ are not EFX-feasible  for agent 1 in $X'$, then $X'_3 \setminus g >_1 \mathit{min}_1(X'_1,X'_2)$ for some $g \in X'_3$. However, in that case, we run the PR algorithm on the partition $X'$ with agent 1's valuation. Let $Y = (Y_1, Y_2, Y_3)$ be the final partition at the end of the PR algorithm. We have,  
\begin{align*}
    \mathit{min}(v_1(Y_1), v_1(Y_2), v_1(Y_3)) &> \mathit{min}(v_1(X'_1), v_1(X'_2), v_1(X'_3)) &\text{(by Lemma~\ref{PlautRoghgarden})}\\
                                               &= \mathit{min}(v_1(X'_1), v_1(X'_2)) &\text{(as $v_1(X'_3) > \mathit{min}(v_1(X'_1),v_1(X'_2))$)}\\ 
                                               &= \phi(X') \\
                                               &>\phi(X)
\end{align*}
We then let agent 2 pick her favorite bundle out of $Y_1, Y_2$ and $Y_3$. Let us assume w.l.o.g. that she chooses $Y_3$. Then, allocation $Y$ satisfies the invariants and we have $\phi(Y) = \mathit{min}(v_1(Y_1), v_1(Y_2)) \geq \mathit{min}(v_1(Y_1), v_1(Y_2), v_1(Y_3)) > \phi(X)$. Thus, we are done.

\paragraph{Remark:} Note that we have not used the MMS-feasibility of $v_3(\cdot)$ yet. All the arguments in this case hold when all three valuation functions are general monotone. We use MMS-feasibility crucially in the upcoming case.

\paragraph{Case: $X_3 \setminus g_2 <_2 X_1 \cup g_2$ and $X_3 \setminus g_3 <_3 X_1 \cup g_3$, i.e., $X_1 \cup g_i$ is the favourite bundle in the partition $X_1 \cup g_i$, $X_2$ and $X_3 \setminus g_i$ for all $i \in \{2,3\}$:} From Observation~\ref{noEFXfeasibility}, we have $X_3 \setminus g_i >_i X_2$ for $i \in \{2,3\}$. Therefore, we have,

\[ X_2 <_2 X_3 \setminus g_2 <_2 X_1 \cup g_2 \quad\text{and}\quad X_2 <_3 X_3 \setminus g_3 <_3 X_1 \cup g_3.\]
By MMS-feasibility of valuation function $v_3(\cdot)$, we conclude that $X_2 <_3 \mathit{max}_3(Z,Z')$ where $(Z, Z')$ is any valid 2-partition of the good set $X_1 \cup X_3$, as MMS-feasibility implies that $\mathit{max}_3(Z, Z') \geq \mathit{min}_3(X_1 \cup g_3, X_3\setminus g_3)>_3 X_2$.  We run the PR algorithm on the 2-partition $(X_1 \cup g_2, X_3 \setminus g_2)$ with agent 2's valuation ($v_2(\cdot)$)\footnote{Note that this time we run the PR algorithm with $n=2$ as opposed to the usual $n=3$ in the prior cases.}. Let $(Y_2,Y_3)$ be the output of the PR algorithm.  We let agent 3 choose her favorite among $Y_2$ and $Y_3$. Assume w.l.o.g. she chooses $Y_3$. Now, consider the allocation $X'$

\[  \text{agent 1}:\ X_2    \quad   \text{agent 2}:\ Y_2    \quad \text{agent 3}: \ Y_3.\]

We now analyze the strong envy in the allocation. To this end, we first observe that agents 2 and 3 do not strongly envy anyone.

\begin{observation}
\label{Y2Y3feasible}
 $Y_2$ is EFX-feasible  for agent 2 and $Y_3$ is EFX-feasible  for agent 3 in $X'$.
\end{observation}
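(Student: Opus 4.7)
The plan is to split the proof into the two separate claims and handle each by combining (a) the guarantee from the PR algorithm on the 2-partition $(Y_2,Y_3)$, (b) the choice made by agent 3, and (c) the inequalities already established in the current case.

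For the claim that $Y_3$ is EFX-feasible for agent 3, I need to compare $Y_3$ to the largest proper-subset-value of each bundle in $X' = (X_2, Y_2, Y_3)$. The comparison against $Y_2$ is immediate: agent 3 picked $Y_3$ as her favorite between $Y_2$ and $Y_3$, so $Y_3 >_3 Y_2 \geq_3 Y_2\setminus g$ for every $g\in Y_2$ by monotonicity. The comparison against $X_2$ is where the earlier case-specific work pays off. Since $(Y_2,Y_3)$ is a 2-partition of $(X_1\cup g_2)\cup(X_3\setminus g_2)=X_1\cup X_3$, and we have already argued (from MMS-feasibility applied to the chain $X_2<_3 X_3\setminus g_3<_3 X_1\cup g_3$) that $X_2 <_3 \max_3(Z,Z')$ for every 2-partition $(Z,Z')$ of $X_1\cup X_3$, we get $X_2<_3 \max_3(Y_2,Y_3)=Y_3$, and then $Y_3>_3 X_2\geq_3 X_2\setminus g$ by monotonicity. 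This finishes EFX-feasibility of $Y_3$ for agent 3.

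For the claim that $Y_2$ is EFX-feasible for agent 2, I need to compare $Y_2$ to $Y_3$ and to $X_2$. The comparison against $Y_3$ is free: by construction $(Y_2,Y_3)$ is the output of the PR algorithm executed on the 2-partition $(X_1\cup g_2,X_3\setminus g_2)$ with valuation $v_2$, so by Lemma~\ref{PlautRoghgarden} the output is EFX with respect to $v_2$, giving $Y_2\geq_2 Y_3\setminus g$ for every $g\in Y_3$. The comparison against $X_2$ will use the same lemma together with Observation~\ref{noEFXfeasibility}: Lemma~\ref{PlautRoghgarden} guarantees $\min(v_2(Y_2),v_2(Y_3))\geq \min(v_2(X_1\cup g_2),v_2(X_3\setminus g_2))=v_2(X_3\setminus g_2)$ (using the current case assumption $X_3\setminus g_2<_2 X_1\cup g_2$), while Observation~\ref{noEFXfeasibility} gives $X_3\setminus g_2>_2 X_2$. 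Chaining these yields $Y_2\geq_2 X_3\setminus g_2 >_2 X_2\geq_2 X_2\setminus g$ for every $g\in X_2$, which finishes EFX-feasibility of $Y_2$ for agent 2.

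The two halves are essentially parallel in structure: each uses one ``choice/PR output'' bound to handle the intra-$\{Y_2,Y_3\}$ comparison, and one previously established inequality (MMS-feasibility for agent 3; Observation~\ref{noEFXfeasibility} plus the case assumption for agent 2) to handle the comparison against $X_2$. The only subtle step is recognizing that $(Y_2,Y_3)$ is indeed a 2-partition of $X_1\cup X_3$, so that the MMS-feasibility consequence derived earlier applies directly; the rest is bookkeeping with monotonicity and transitivity.
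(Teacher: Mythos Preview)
Your proof is correct and follows essentially the same approach as the paper: for agent~2 you use the PR-output EFX guarantee for the comparison against $Y_3$ and Lemma~\ref{PlautRoghgarden} together with the case assumption and Observation~\ref{noEFXfeasibility} for the comparison against $X_2$; for agent~3 you use her choice of $Y_3$ and the MMS-feasibility consequence for 2-partitions of $X_1\cup X_3$. The only difference from the paper is the order of the two halves and slightly more explicit invocations of monotonicity.
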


\begin{proof}
 Since $(Y_2,Y_3)$ is the output of the PR algorithm run on $(X_1 \cup g_2, X_3 \setminus g_2)$ with agent 2's valuation function, (i) $Y_2 >_2 Y_3 \setminus h$ for all $h \in Y_3$, and (ii) $Y_2 \geq \mathit{min}_2(X_1 \cup g_2, X_3 \setminus g_2) >_2 X_2$, where the first inequality follows from Lemma~\ref{PlautRoghgarden} and the second inequality follows from the fact that $X_1 \cup g_2 >_2 X_3 \setminus g_2 >_2 X_2$. Therefore $Y_2$ is EFX-feasible  w.r.t. agent 2.

 Now, we look into agent 3. Note that $Y_3 = \mathit{max}_3(Y_2, Y_3)$ as agent 3 picks her favourite among $Y_2$ and $Y_3$. Therefore $Y_3 >_3 Y_2$\footnote{Strict inequality follows due to non-degeneracy.}. Furthermore, due to the MMS-feasibility of $v_3(\cdot)$ and the fact that $(Y_2, Y_3)$ is a valid 2 partition of the good set $X_1\cup X_3$, we have  $Y_3 = \mathit{max}_3(Y_2, Y_3) >_3 X_2$. Therefore, $Y_3>_3 \mathit{max}_3(Y_2,X_2)$ and thus is an EFX-feasible  bundle for agent 3.
\end{proof}

Therefore, the only possible strong envy is from agent 1. We now enlist the possible strong envy that may arise from agent 1 and also show corresponding reallocations.

\begin{itemize}
    \item Agent 1 does not strongly envy $Y_2$ and $Y_3$: Then we are done as $X'$ is an EFX allocation.
    
    \item Agent 1 strongly envies both $Y_2$ and $Y_3$: In this case, we have $Y_2 >_1 X_2$ and $Y_3 >_1 X_2$. We run the PR algorithm on the partition $(X_2, Y_2, Y_3)$ with agent 1's valuation function ($v_1(\cdot)$) and let agent 2 pick her favourite bundle from the final partition $X''$ returned by the PR algorithm. Then, we have a partition that satisfies the invariants and $\phi(X'')> \phi(X)$ as $\mathit{min}_1(X''_1, X''_2, X''_3) >_1 \mathit{min}_1(X_2,Y_2,Y_3) = X_2>_1 X_1 = \phi(X)$, where the first inequality follows from Lemma~\ref{PlautRoghgarden}.
    
    \item Agent 1 strongly envies one of $Y_2$ and $Y_3$: Let us assume without loss of generality that agent 1 strongly envies $Y_2$. Let $\overline{Y}_2$ be the minimal subset of $Y_2$ w.r.t. set inclusion that agent 1 values more than $X_2$.  Then, consider the partition $X'' = ( X''_1, X''_2, X''_3 )$  where $X''_1 = X_2$, $X''_2 = \overline{Y}_2$ and $X''_3 = Y_3 \cup (Y_2 \setminus \overline{Y}_2)$. First note that $X''_3$ is EFX-feasible  for agent 3 as $X'_3 = Y_3$ was EFX-feasible  in allocation $X'$ and now the bundle $X''_1$ remains the same, the bundle $X'_2$ has been compressed further in $X''$, and $X'_3 \subset X''_3$. Also note that $\phi(X'') = \mathit{min}(v_1(X''_1),v_1(X''_2)) =  \mathit{min}(v_1(X_2), v_1(\overline{Y}_2)) = v_1(X_2) > v_1(X_1) = \phi(X)$. If $X''_1$ and $X''_2$ are EFX-feasible  for agent 1, then partition $X''$ satisfies the invariants and $\phi(X'') > \phi(X)$ and we are done. So now consider the case when at least one of $X''_1$ and $X''_2$ is not EFX-feasible  for agent 1. Note that $X''_1 >_1 X''_2 \setminus h$ for all $h \in X''_2$ and $X''_2 >_1 X''_1$ by the fact that $X''_1 =X_2$ and by the definition of $X''_2 = \overline{Y}_2$. Thus, if one of $X''_1$ or $X''_2$ is not EFX-feasible  for agent 1, then we must have $X''_3 \setminus h' >_1 \mathit{min}_1(X''_1,X''_2)$ for some $h' \in X''_3$. In this case, we run the PR algorithm on the partition $(X''_1, X''_2, X''_3)$ with agent 1's valuation function $v_1( \cdot )$ and let agent 2 pick her favourite bundle from the final partition $Z$ returned by the PR algorithm. Then $Z$ satisfies the invariants and 
    \begin{align*}
        \phi(Z) &\geq \mathit{min}(v_1(Z_1), v_1(Z_2), v_1(Z_3))\\
                  &\geq \mathit{min}(v_1(X''_1), v_1(X''_2), v_1(X''_3))\\
                  &=v_1(X_2)\\
                  &>v_1(X_1)=\phi(X).
    \end{align*}
    So we are done.
\end{itemize}

This brings us to the main result of this section.

\begin{theorem}
\label{mainthm1}
Given an instance $I = \langle [3], M, \mathcal V \rangle$ such that $v_3(\cdot)$ is \Hana (no assumptions other than monotonicity on $v_1(\cdot)$ and $v_2(\cdot)$), there always exists an allocation $X = \langle X_1, X_2, X_3 \rangle$ such that $X$ is EFX. 
\end{theorem}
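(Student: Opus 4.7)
The plan is to prove the theorem algorithmically, moving through the space of complete 3-partitions $X = (X_1, X_2, X_3)$ while maintaining the two invariants: (i) both $X_1$ and $X_2$ are EFX-feasible for agent $1$, and (ii) $X_3$ is EFX-feasible for at least one of agents $2$ and $3$. To initialize, I would run the PR algorithm (Lemma~\ref{PlautRoghgarden}) pretending all three agents have valuation $v_1$, obtaining a partition where every bundle is EFX-feasible for agent $1$, and then let agent $2$ pick her favorite bundle and relabel it as $X_3$. With this setup in hand, I would take $\phi(X) := \min(v_1(X_1), v_1(X_2))$ as the potential, and show that whenever the current $X$ fails to produce an EFX allocation, a local reallocation yields a new partition $X'$ satisfying the invariants with $\phi(X') > \phi(X)$. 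Because there are finitely many partitions (up to relabeling of bundles) and by non-degeneracy $\phi$ takes distinct values on distinct partitions, this process terminates at an EFX allocation.

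Next I would handle the easy extraction cases: if agent $3$ finds $X_1$ or $X_2$ EFX-feasible, assign that bundle to agent $3$, $X_3$ to agent $2$, and the remaining one to agent $1$; if agent $2$ finds $X_1$ or $X_2$ EFX-feasible, let agent $3$ pick her favorite and check that the other two agents can always be accommodated. This reduces the hard case to the scenario where $X_3$ is the \emph{only} EFX-feasible bundle for both agents $2$ and $3$. Here, assuming w.l.o.g.\ $X_1 <_1 X_2$ so that $\phi(X) = v_1(X_1)$, and letting $g_i \in X_3$ be the ``least-loved'' good by agent $i$ for $i \in \{2,3\}$, I would verify by a short argument (as in Observation~\ref{noEFXfeasibility}) that $X_3 \setminus g_i >_i \max_i(X_1, X_2)$.

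The heart of the proof is then a two-case analysis based on how $X_3 \setminus g_i$ compares to $X_1 \cup g_i$ for agent $i$. In the first case, some agent $i \in \{2, 3\}$ has $X_3 \setminus g_i >_i X_1 \cup g_i$, so moving $g_i$ to $X_1$ and taking a minimal subset $X_1'$ of $X_1 \cup g_i$ with $X_1' >_1 X_1$ yields a partition with strictly larger potential, after which, if necessary, one re-runs PR with $v_1$ and lets agent $2$ pick her favorite to restore the invariants. The delicate case, and the main obstacle, is the second case where both agents $2$ and $3$ strictly prefer $X_1 \cup g_i$ to $X_3 \setminus g_i$; this is the step where MMS-feasibility of $v_3$ is essential and cannot be replaced by mere monotonicity.

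To handle that case, I would run PR on the $2$-partition $(X_1 \cup g_2, X_3 \setminus g_2)$ using $v_2$, obtaining $(Y_2, Y_3)$, and let agent $3$ pick her favorite of the two (say $Y_3$); I would then consider the candidate allocation (agent~$1 \gets X_2$, agent~$2 \gets Y_2$, agent~$3 \gets Y_3$). Using Lemma~\ref{PlautRoghgarden} and MMS-feasibility, I would check that $Y_2$ and $Y_3$ are EFX-feasible for agents~$2$ and~$3$ respectively (the MMS-feasibility step gives $\max_3(Y_2, Y_3) >_3 X_2$ because both $X_1 \cup g_3$ and $X_3 \setminus g_3$ dominate $X_2$ for agent $3$). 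The only remaining worry is agent~$1$'s envy, and I would dispose of this by three subcases: no envy (done), envy of both $Y_2, Y_3$ (re-run PR with $v_1$ on $(X_2, Y_2, Y_3)$), or envy of exactly one (compress the envied bundle minimally, run PR on the resulting partition if need be). In every subcase the potential is strictly larger than $\phi(X) = v_1(X_1)$ because $v_1(X_2) > v_1(X_1)$, closing the induction and proving the theorem.
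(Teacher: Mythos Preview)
Your proposal is correct and follows essentially the same approach as the paper: the same invariants, the same potential $\phi(X)=\min(v_1(X_1),v_1(X_2))$, the same easy extraction cases, the same two-case split on $X_3\setminus g_i$ versus $X_1\cup g_i$, and the same use of MMS-feasibility of $v_3$ via the 2-partition PR step in the second case. The only minor imprecision is your termination justification (``by non-degeneracy $\phi$ takes distinct values on distinct partitions'' is not literally true), but what you need---and clearly intend---is that $\phi$ strictly increases and ranges over the finite set $\{v_1(S):S\subseteq M\}$, which suffices.
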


\section{Bounds on Rainbow Cycle Number}
\label{rainbowcycle}
In this section we improve the upper bounds on the rainbow cycle number introduced in~\cite{CGMMM21}, thereby implying the existence of approximate EFX allocations with $\mathcal{\tilde{O}}(n/ \varepsilon)^{\sfrac{1}{2}})$ charity.  \cite{CGMMM21} give an upper bound of $R(d) \in \mathcal{O}(d^4)$ and they show it results in the existence of a $(1-\varepsilon)$-EFX allocation with $\mathcal{O}((n/\varepsilon)^{\sfrac{4}{5}})$ charity. In the same paper,~\cite{CGMMM21} show a lower bound of $d$ on $R(d)$. In this section, we show improved bounds on $R(d)$. In particular, we first show in Section~\ref{boundonR} that $R(d) \in \mathcal{O}(d \log d)$ (making the upper bound almost tight), and thereby implying the existence of $(1-\varepsilon)$-EFX allocations with $\mathcal{\tilde{O}}((n / \varepsilon)^{\sfrac{1}{2}})$ charity. Secondly, in section \ref{permutation}, we show an upper bound of $2d-2$ assuming that every vertex $v \in V_i$ has exactly one incoming edge from any other part $V_j \neq V_i$ and exactly one outgoing edge to some vertex in $V_j$. We call this number $R_p(d)$. We remark that the lower bound of $d$ in~\cite{CGMMM21} also holds for $R_p(d)$. The upper bound of $2d-2$ immediately improves the upper-bound on the zero-sum extremal problem studied in~\cite{AK21, MS21}.


\subsection{Almost Tight Upper Bound on $R(d)$}
\label{boundonR}
Recall that $R(d)$ is the largest
$k$ such that there exists a 
$k$-partite digraph $G$ with $k$ classes of vertices $V_i$ so that each part
$V_i$ has at most $d$ vertices, for all distinct $i,j$ each 
vertex in $V_i$ has an incoming  edge
from some vertex in $V_j$ and vice versa, and 
there exists no (directed) rainbow cycle, namely, a cycle
in $G$ 
that contains at most one vertex of each $V_i$. 
In this section, we prove the following improved bound which is
tight up to the logarithmic factor. 
\begin{theorem}
\label{mainthmsec2}
\label{p11}
If 
\begin{equation}
\label{e11}
k (1-1/d)^{k-1} <1
\end{equation}
then $R(d)<k.$
Therefore
$R(d)\leq (1+o(1))d \log d$
\end{theorem}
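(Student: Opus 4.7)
The plan is to use the probabilistic method exactly as foreshadowed in Section~\ref{techoverview}. I would fix any $k$-partite digraph $G = (V_1 \cup \dots \cup V_k, E)$ meeting the two structural conditions of Definition~\ref{rainbow-def} and argue that under hypothesis~\eqref{e11} it must contain a rainbow cycle, which establishes $R(d) < k$.

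The core step is to sample one vertex $v_i$ uniformly at random from each $V_i$, independently across parts, and look at the induced subgraph $H := G[\{v_1, \dots, v_k\}]$. I would first record the purely combinatorial observation that \emph{if every vertex of $H$ has an in-neighbour in $H$}, then tracing in-edges backwards from any vertex must revisit a vertex by finiteness of $H$, producing a directed cycle; this cycle is automatically rainbow because $H$ has at most one vertex per $V_i$. So it suffices to show that with positive probability every $v_i$ has an incoming edge in $H$.

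For the probability estimate, I would fix $i$ and condition on $v_i = u$. Since the definition guarantees that $u$ has at least one in-neighbour in $V_j$ for every $j \ne i$, the probability that the independently chosen $v_j$ fails to be such an in-neighbour is at most $1 - 1/|V_j| \le 1 - 1/d$. Independence across $j$ then yields
\[
\prob[v_i \text{ has no in-neighbour in } H \mid v_i = u] \;\le\; (1-1/d)^{k-1},
\]
and removing the conditioning followed by a union bound over $i \in [k]$ gives $\prob[\exists\, i : v_i \text{ has no in-neighbour in } H] \le k(1-1/d)^{k-1}$. Under \eqref{e11} this is strictly below $1$, so a good sample, and hence a rainbow cycle, must exist.

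For the asymptotic consequence I would use $1 - 1/d \le e^{-1/d}$ to reduce \eqref{e11} to the cleaner inequality $k - 1 > d \log k$, and then observe that $k = (1+o(1))\, d \log d$ already satisfies this, since $\log k = \log d + O(\log\log d)$. The argument presents no real obstacle; the only step worth being careful about is the independence claim --- the events ``$(v_j, v_i) \notin E$'' for different $j$ are conditionally independent given $v_i$ precisely because the samples are drawn independently across parts, which is what legitimises the product bound.
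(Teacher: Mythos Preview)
Your proposal is correct and is essentially the same probabilistic-method argument the paper gives: independently sample one vertex per part, bound via independence and a union bound the probability that some sampled vertex lacks an in-neighbour, and conclude a rainbow cycle exists when \eqref{e11} holds. The only cosmetic difference is that the paper phrases the union bound over all vertices $v$ (weighting by the probability $1/|V_i|$ that $v$ is sampled), which telescopes to the same $k(1-1/d)^{k-1}$ you obtain by union-bounding over the $k$ indices.
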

\begin{proof}
Suppose $k (1-1/d)^{k-1} <1$.
Let $S$ be a random set of $k$ vertices of $G$ obtained by picking
a single  vertex $v_i$ in each $V_i$, randomly and uniformly among all
vertices of $V_i$, where all choices  are independent.  For each
vertex $v$, let $E_v$ be the event that $S$ contains $v$ and 
contains no other vertex
$u$ so that $uv$ is a directed edge. We claim that if $v\in V_i$ then the
probability of $E_v$ is at most
$$
\frac{1}{|V_i|} (1-1/d)^{k-1}.
$$
Indeed, the probability that $v \in S$ is $1/|V_i|$. Conditioning on that,
since for every $j \neq i$ there is some 
$u_j \in  V_j$ so that $u_jv$ is a directed edge, and the probability
that $u_j$ is in $S$ is $1/|V_j| \geq 1/d$, the probability that $v$ has
non in-neighbor  in $V_j$ is at most $1-1/d$.
As the choices are independent, the claim follows.
By the union bound, the probability, that there is a vertex $v$ so that 
the event $E_v$ occurs is at most
$$
\sum_{i=1}^k |V_i| \frac{1}{|V_i|} (1-1/d)^{k-1} =k(1-1/d)^{k-1}<1.
$$
Therefore, with positive probability every vertex in the induced
subgraph of $G$ on $S$ has an in-neighbor. Hence there is such an $S$
and in this induced subgraph there is a cycle which contains at most
one vertex from each $V_i$.
Thus $R(d)<k$, completing the proof.  
\end{proof}
\vspace{0.2cm}

Theorems~\ref{theorem-reduction} and Theorem~\ref{mainthmsec2} then imply Theorem~\ref{eps-efx}.

\noindent
\paragraph{Remark.} The proof above can be derandomized using the
method of conditional expectations
(cf., e.g.,~\cite{AlonS92}, chapter 16), giving the following.
\begin{proposition}
\label{p12}
Let $G$ be a $k$-partite digraph with classes of vertices 
$V_i$, each having at most
$d$ vertices. Suppose that for all distinct $i,j$ each 
vertex in $V_i$ has an incoming  edge
from some vertex in $V_j$ and vice versa, 
and suppose that (\ref{e11}) holds. Then a 
rainbow cycle in $G$ can be found by a deterministic 
polynomial time algorithm.
\end{proposition}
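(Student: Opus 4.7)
The plan is to derandomize the probabilistic argument of Theorem~\ref{mainthmsec2} via the method of conditional expectations. Let $X = \sum_{v \in V(G)} \mathbf{1}_{E_v}$ count the number of ``bad'' vertices, namely vertices $v$ with $v \in S$ having no in-neighbor in $S$. The probabilistic proof already shows $\mathbb{E}[X] \leq k(1-1/d)^{k-1} < 1$. Since $X$ is a nonnegative integer, any deterministic $S$ with $X(S) \leq \mathbb{E}[X]$ satisfies $X(S) = 0$, so every vertex of $G[S]$ has an in-neighbor in $S$; hence $G[S]$ contains a directed cycle, and because $|S \cap V_i| \leq 1$ for every $i$, it is automatically rainbow.

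To produce such an $S$ deterministically, I would process the parts $V_1, V_2, \ldots, V_k$ in order. At stage $i$, given already-committed choices $v_1, \ldots, v_{i-1}$, I would evaluate, for each candidate $v \in V_i$, the conditional expectation $\mathbb{E}[X \mid v_1, \ldots, v_{i-1}, v_i = v]$, and select the candidate minimizing it. Because the coordinates $v_1, \ldots, v_k$ are independent, the conditional probability of any single event $E_u$ factors as the product of (i) the probability that $u$ itself lies in $S$ (an indicator once its part is settled, $1/|V_j|$ otherwise, where $u \in V_j$) and (ii) for each other part $V_\ell$, the probability that no in-neighbor of $u$ inside $V_\ell$ is selected; for settled parts this is $0$ or $1$, for unsettled ones it equals $1 - |N^-(u) \cap V_\ell|/|V_\ell|$. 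Summing these products over $u \in V(G)$ gives the conditional expectation in polynomial time, and at most $d$ candidates must be evaluated per stage over $k$ stages.

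By the standard averaging argument underlying the method of conditional expectations, at each stage some choice keeps the conditional expectation no larger than its value before the stage; inductively, the final $S$ satisfies $X(S) \leq \mathbb{E}[X] < 1$, hence $X(S) = 0$. To extract a rainbow cycle from $G[S]$, I would start at any vertex and repeatedly walk backwards along an in-edge inside $S$ (which exists at every step since $X(S) = 0$) until a vertex is revisited; the closed segment is the desired rainbow cycle. I expect the only delicate point to be the clean factorization of $\Pr[E_u \mid v_1, \ldots, v_i]$, but this is immediate from the independence of the coordinates in the sampling distribution, so the argument reduces to routine bookkeeping.
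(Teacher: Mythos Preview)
Your proposal is correct and follows essentially the same route as the paper: derandomize the probabilistic argument of Theorem~\ref{mainthmsec2} by the method of conditional expectations, fixing the vertices $v_1,\ldots,v_k$ one part at a time so as to keep the conditional expectation of $X=\sum_v \mathbf{1}_{E_v}$ below $1$, and then extract the rainbow cycle by repeatedly walking to an in-neighbor inside $S$. Your write-up is in fact a bit more explicit than the paper's about how the conditional probabilities factor and are computed, but the underlying argument is identical.
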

\begin{proof}
We apply the method of conditional expectations to produce a set 
$S=\{s_1, s_2, \ldots ,s_k\}$ of vertices of $G$, where
$s_i \in V_i$, so that every
indegree in the induced subgraph of $G$ on $S$ is positive. This is done
by choosing the vertices $s_i$ one by one, in order, maintaining
a potential function $\phi$ whose value is the conditional 
expectation of the number of events $E_v$ that hold, given the
choices of the vertices $s_i$ made so far.

At the beginning, there are no choices made, and the value of
$\phi$ is the sum
$$
\sum_{i=1}^k |V_i| \frac{1}{|V_i|} (1-1/d)^{k-1} =k(1-1/d)^{k-1}<1.
$$
Assuming $s_1,s_2, \ldots ,s_{i-1}$ have already been chosen
and the above conditional expectation is still smaller than
$1$, choose $s_i \in V_i$ to be the vertex that minimizes the
updated value of the conditional expectation. As the expectation is the
average over all possible choices of $s_i$, this minimum stays below $1$.
The computation of the required conditional expectations, for each of
the possible $|V_i|\leq d$ choices of $s_i \in V_i$, can clearly be
done efficiently. At the end of the process the value of the potential
function is exactly the number of events $E_v$ that hold, and since this
number is below $1$, none of them holds. This supplies the required
set $S$. Starting in any vertex of $S$ and moving repeatedly to an
in-neighbor of it in $S$ until we reach a vertex we have already visited
supplies the desired rainbow cycle.
\end{proof}

\label{general}
\subsection{A linear upper bound on $R_p(d)$}
\label{perm-rainbowcyclenumer}
In this section we assume graph $G$ satisfies all the properties in Definition \ref{rainbow-def} and also for all different parts $V_i$ and $V_j$, each vertex in $V_i$ has exactly one outgoing edge to a vertex in $V_j$. We call these graphs permutation graphs since the set of edges from any part to any other part defines a permutation.

\begin{definition}\label{rainbow-perm-def}
    Given an integer $d > 0$, the permutation rainbow cycle number $R_p(d)$ is the largest $k$ such that there exists a $k$-partite graph $G =(V_1 \cup V_2 \cup \dots \cup V_k, E)$ such that 
    \begin{itemize}
        \item each part has exactly $d$ vertices, i.e., $\lvert V_i \rvert = d$, and 
        \item every vertex has exactly one incoming edge from every part other than the one containing it.
        \item every vertex has exactly one outgoing edge to every part other than the one containing it.
        \item  there exists no cycle $C$ in $G$ that visits each part at most once.
    \end{itemize}
\end{definition}

\begin{theorem}\label{perm-theorem}
    For all integers $d>0$, $R_p(d)<2d-1$.
\end{theorem}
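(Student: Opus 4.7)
The plan is to prove $R_p(d) \leq 2d-2$ by a direct pigeonhole argument applied to the out- and in-stars of an arbitrary vertex. Assume for contradiction that there is no rainbow cycle and that $k \geq 2d-1$. Fix any vertex $v_0 \in V_1$. For each $i \in \{2, \ldots, k\}$, the permutation structure gives a unique out-neighbor $\alpha_i := \pi_{1,i}(v_0) \in V_i$ and a unique in-neighbor $\beta_i \in V_i$ of $v_0$ (where $\beta_i$ is characterized by $\pi_{i,1}(\beta_i)=v_0$). If $\alpha_i = \beta_i$ for some $i$, then $v_0 \to \alpha_i \to v_0$ is a rainbow $2$-cycle and we are done, so I may assume $\alpha_i \neq \beta_i$ for every $i \geq 2$.

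Next, for each $i$, define the two \emph{return vertices} in $V_1$:
\[
\gamma_i := \pi_{i,1}(\alpha_i), \qquad \delta_i := \pi_{1,i}^{-1}(\beta_i),
\]
so that $v_0 \to \alpha_i \to \gamma_i$ and $\delta_i \to \beta_i \to v_0$ are length-two rainbow walks. A short check using the no-$2$-cycle assumption shows that $\gamma_i \neq v_0$ (otherwise $v_0 \to \alpha_i \to v_0$ would be a $2$-cycle) and likewise $\delta_i \neq v_0$ (otherwise $\beta_i = \alpha_i$). Consequently the multiset $\{\gamma_i, \delta_i : 2 \leq i \leq k\}$ consists of $2(k-1) \geq 4(d-1)$ elements in $V_1 \setminus \{v_0\}$, which has size $d-1$. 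By pigeonhole, some vertex $v^* \in V_1 \setminus \{v_0\}$ is hit at least four times by these returns.

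From the $\geq 4$ coincidences at $v^*$, I want to produce a rainbow cycle. The point is that an index $i$ with $\gamma_i = v^*$ yields a length-two walk $v_0 \to \alpha_i \to v^*$, while an index $j$ with $\delta_j = v^*$ yields a length-two walk $v^* \to \beta_j \to v_0$. Using the multiplicity together with the permutation structure between the distinct intermediate parts $V_i, V_j, \ldots$, the plan is to reroute so that the walk closes into a rainbow cycle of length $2$, $3$, or $4$ -- for example by exploiting the unique edge between $V_i$ and $V_j$ (which, by the permutation property, always exists in both directions) to realize the $3$-cycle $v_0 \to \alpha_i \to \beta_j \to v_0$ whenever $\pi_{i,j}(\alpha_i) = \beta_j$, or by producing a rainbow $4$-cycle $v_0 \to \alpha_i \to z \to \beta_\ell \to v_0$ (with $z, \ell$ supplied by a second application of pigeonhole) whenever such a configuration occurs for \emph{any} $\ell \notin \{1, i, j\}$; the multiplicity at $v^*$ and the fact that we have two independent choices of starting index $i$ give enough slack to force at least one such completion.

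The main obstacle is precisely this extraction step: a naive concatenation of an outgoing return with an incoming return at $v^*$ gives a closed walk of length four through $V_1$ twice, which is \emph{not} rainbow. I expect the argument to side-step this by either (i) using the direct $V_i \to V_j$ edge between two \emph{distinct} intermediate parts witnessed by the coincidence (producing a rainbow $3$-cycle), or (ii) iterating the pigeonhole at $v^*$ on the out-star of $v^*$ (which has the same structure as the out-star of $v_0$), showing that if all short rainbow cycles are avoided then the available $d-1$ slots in $V_1 \setminus \{v_0\}$ cannot accommodate the $2(k-1) \geq 4(d-1)$ return vertices without creating a cycle.
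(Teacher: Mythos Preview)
Your setup is clean and the first pigeonhole step is correct: with $k\ge 2d-1$ you do get a vertex $v^*\in V_1\setminus\{v_0\}$ that is hit at least four times by the returns $\gamma_i,\delta_i$. The gap is entirely in the extraction step, and neither of your two proposed fixes closes it.

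For option (i): a coincidence $\gamma_i=\delta_j=v^*$ says only that $\pi_{i,1}(\alpha_i)=v^*$ and $\pi_{1,j}(v^*)=\beta_j$. Both of these are statements about edges incident to $V_1$; they place \emph{no} constraint whatsoever on the permutation $\pi_{i,j}$ between $V_i$ and $V_j$. So there is no reason the direct edge $\alpha_i\to\pi_{i,j}(\alpha_i)$ should land on $\beta_j$, and having several indices at $v^*$ does not help: distinct coincidence pairs $(i,j)$ give independent conditions on the unrelated permutations $\pi_{i,j}$. Your proposed $4$-cycle $v_0\to\alpha_i\to z\to\beta_\ell\to v_0$ has the same defect: the composition $\pi_{\ell',i_2}\circ\pi_{i_1,\ell'}$ applied to $\alpha_{i_1}$, as $\ell'$ ranges over the $2d-4$ available intermediate parts, lands somewhere in $V_{i_2}$, but nothing forces any of these landings to be the specific vertex $\beta_{i_2}$. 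For option (ii): ``the $d-1$ slots cannot accommodate $4(d-1)$ returns'' is exactly what pigeonhole already gave you; it is multiplicity, not a cycle. Iterating the same argument from $v^*$ produces a new high-multiplicity vertex $v^{**}$, but you have not explained how to ever close up into a rainbow cycle rather than a walk that revisits $V_1$.

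More structurally, your proposal only looks for rainbow cycles of length at most four. There is no reason the shortest rainbow cycle in a $(2d-1)$-partite permutation graph should be that short; the paper's argument in fact builds rainbow \emph{paths} that may use all $2d-2$ of the first parts before closing. The paper proceeds by induction on $d$: starting from $v\in V_1$, it shows (reindexing the parts as it goes) that for each $i\le d$ there are at least $i$ vertices in $V_{2i-1}$ reachable from $v$ by rainbow paths confined to $V_1\cup\cdots\cup V_{2i-2}$. The inductive step either enlarges the reachable set by one using a single cross-edge between two fresh parts, or else exhibits a $(k-2i+1)$-partite permutation subgraph on parts of size $d-i$, which by the induction hypothesis on $d$ already contains a rainbow cycle. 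At $i=d$ all of $V_{2d-1}$ is reachable, and the unique edge from $V_{2d-1}$ back to $v$ closes the cycle. If you want to salvage your approach, this ``grow the reachable set by one vertex per two parts, or drop to a smaller $d$'' dichotomy is the missing ingredient; a single static pigeonhole on returns to $V_1$ does not see enough of the inter-part permutations to force a cycle.
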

 In the rest of this section we prove Theorem \ref{perm-theorem}. The proof is by induction. 
 
\paragraph{Basis:} For the base case, consider $d=1$. If there are $2$ parts or more, the vertex in $V_1$ has an outgoing edge to the vertex in $V_2$ and vice versa. Therefore, there exists a rainbow cycle $C$ in $G$ which is a contradiction. Thus, $R_p(1)<2$. 
 
\paragraph{Induction step:} We assume 
\begin{align}
    \text{for all $d'<d$, \quad $R_p(d')<2d'-1$,}\label{induc-assump}
\end{align}
and prove $R_p(d)<2d-1$. First we define $i$-restricted paths which are the paths that use each part at most once and except for the last vertex, all vertices are in the first $i$ parts.

\begin{definition}
    We call path $P = v_1 \rightarrow v_2 \rightarrow \dots \rightarrow v_t$ an $i$-restricted path if 
    \begin{itemize}
        \item $v_1, \ldots, v_{t-1} \in V_1 \cup V_2 \cup \dots \cup V_{i}$, and
        \item $P$ visits each part at most once.
    \end{itemize}
\end{definition}
Note that for all $j>i$, every $i$-restricted path is also a $j$-restricted path. 
Now we prove the following claim.
\begin{claim}\label{colorful-claim}
    If $k \geq 2d-1$, for every vertex $v$, there is a way of reindexing the parts such that $v \in V_1$ and for all $i \in [d]$, there are $i$ nodes in $V_{2i-1}$ which are reachable from $v$ via $(2i-2)$-restricted paths.
\end{claim}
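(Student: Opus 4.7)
The plan is to prove Claim~\ref{colorful-claim} by induction on $i$, reindexing the parts two at a time. For the base case $i=1$, I reindex so $v\in V_1$; then $v$ itself witnesses one node in $V_1$ reachable from $v$ via the trivial (length $0$) $0$-restricted path. For the inductive step, suppose $V_1,\ldots,V_{2i-1}$ have already been reindexed so the claim holds up to level $i$, and let $A\subseteq V_{2i-1}$ denote the reachable set of size $i$ at that level. Write $\sigma_{s,t}\colon V_s\to V_t$ for the permutation induced by the edges from $V_s$ to $V_t$. Since $k\ge 2d-1$ and only $2i-1$ parts have been fixed, at least $k-(2i-1)\ge 2(d-i)\ge 2$ parts remain, from which I will choose $V_{2i}:=V_p$ and $V_{2i+1}:=V_q$.

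For any remaining pair $(V_p,V_q)$, two subsets of $S_{2i+1}$ are immediately visible: the set $\sigma_{2i-1,q}(A)$ (extending each of the $i$ paths ending in $A$ by one edge to $V_q$), and the set $\sigma_{p,q}(\sigma_{2i-1,p}(A))$ (routing that extension through $V_p$ first); each has size exactly $i$ by the permutation property. I will argue that some remaining pair yields $|S_{2i+1}|\ge i+1$, completing the step. Suppose instead, for contradiction, that $|S_{2i+1}|\le i$ for every remaining pair. Then $\sigma_{2i-1,q}(A)\subseteq S_{2i+1}$ forces $S_{2i+1}=\sigma_{2i-1,q}(A)$, and the inclusion $\sigma_{p,q}(\sigma_{2i-1,p}(A))\subseteq S_{2i+1}$ gives $\sigma_{p,q}(\sigma_{2i-1,p}(A))=\sigma_{2i-1,q}(A)$. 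Hence $\sigma_{p,q}$ maps $B_p:=\sigma_{2i-1,p}(A)$ bijectively to $B_q:=\sigma_{2i-1,q}(A)$, and therefore also maps $V_p\setminus B_p$ bijectively to $V_q\setminus B_q$ (each of size $d-i$). The subgraph $G'$ induced on these complements over all remaining parts is thus itself a permutation graph in the sense of Definition~\ref{rainbow-perm-def}, with $d':=d-i<d$ vertices per part and $k':=k-(2i-1)\ge 2d'$ parts. By the outer induction hypothesis~\eqref{induc-assump}, $k'>R_p(d')$, so $G'$ contains a rainbow cycle; but the parts of $G'$ lie in distinct parts of $G$, so this is also a rainbow cycle in $G$, contradicting the standing assumption in the proof of Theorem~\ref{perm-theorem}.

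The main obstacle is the careful verification of the contradictory case: one must show that ``$|S_{2i+1}|\le i$ for every remaining pair of parts'' genuinely forces the set-wise identity $\sigma_{p,q}(B_p)=B_q$ across every pair simultaneously, so that the induced subgraph $G'$ really does inherit the full permutation-graph structure (exactly one incoming and one outgoing edge between each pair of its parts) required to apply the outer induction hypothesis with parameter $d-i$. Once this is in place, the descent in $d$ cleanly closes the inductive step and the claim follows.
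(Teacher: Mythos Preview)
Your proof is correct and follows essentially the same approach as the paper: inner induction on $i$, reindexing two parts at each step, and in the contradictory case extracting a smaller permutation graph on the complements to invoke the outer induction hypothesis~\eqref{induc-assump}. Your formulation via the permutations $\sigma_{p,q}$ and the single condition ``$|S_{2i+1}|\ge i+1$ for some pair'' cleanly merges what the paper splits into two sub-cases (some $|U'|\ge i+1$, versus all $|U'|=i$ but an edge from some $W'$ to some $\overline{U}$), but the underlying argument is the same.
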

\begin{proof}
    The proof of the claim is also by induction. For the base case let $i=1$. If $v \in U$, set $V_1 = U$ and the claim follows.
    For the induction step, we assume $V_1, V_2, \dots, V_{2i-1}$ are already defined and there is a $(2i-2)$-restricted path from $v$ to $v_1, v_2, \dots, v_i \in V_{2i-1}$. Consider any part $U \notin \{V_1, V_2, \dots, V_{2i-1}\}$. For all $j \in [i]$, let $v_j \rightarrow u_j$ be the outgoing edge from $v_j$ to $U$. Since each node in $V_{2i-1}$ has exactly one outgoing edge to $U$ and each node in $U$ has exactly one incoming edge from $V$, $u_1, u_2, \ldots, u_i$ are distinct. Therefore, at least $i$ nodes in $U$ are reachable from $v$ via $(2i-1)$-restricted paths. Let $U' \subseteq U$ be the vertices that are reachable from $v$ via $(2i-1)$-restricted paths and let $\overline{U} = U \setminus U'$. If $\lvert U'\rvert \geq i+1$, we set $V_{2i} = W$ for some $W \notin \{V_1, V_2, \dots, V_{2i-1}, U\}$ and set $V_{2i+1} = U$ and the claim follows. Otherwise, for all $U \notin \{V_1, V_2, \dots, V_{2i-1}\}$, we have $\lvert U'\rvert = i$ and $\lvert \overline{U}\rvert = d-i$. If there exist $U,W \notin \{V_1, V_2, \dots, V_{2i-1}\}$ such that $w \in W'$ has an outgoing edge to $u \in \overline{U}$, then we set $V_{2i} = W$ and $V_{2i+1} = U$. Note that all nodes in $U'$ are reachable from $v$ using $(2i-1)$-restricted paths and $u$ is reachable via a $(2i)$-restricted path. Therefore, in total $i+1$ vertices in $U = V_{2i+1}$ are reachable from $v$ via $(2i)$-restricted paths. See Figure \ref{colorful-fig-1} for an illustration.
    \begin{figure}
        \centering
        \begin{tikzpicture}
[scale=0.85,
 agent/.style={circle, draw=green!60, fill=green!5, very thick},
 good/.style={circle, draw=red!60, fill=red!5, very thick, minimum size=1pt},
]

\draw[black, very thick] (-0.5,-0.5-1) rectangle (0.5,3.5);

\node[agent]      (v) at (0,2.75)      {$\scriptstyle{v}$};

\filldraw[color=black!60, fill=black!5, very thick](0, 2) circle (0.02);
\filldraw[color=black!60, fill=black!5, very thick](0, 1.5) circle (0.02);
\filldraw[color=black!60, fill=black!5, very thick](0, 1) circle (0.02);

\node at (0, -0.5-1.30) {$\scriptstyle{V_1}$};

\draw[black, very thick] (-0.5+4,-0.5-1) rectangle (0.5+4,3.5);

\node[agent]      (v1) at (4,2.75)      {$\scriptstyle{v_1}$};
\node[agent]      (vi) at (4,2.75-1.75)      {$\scriptstyle{v_{i}}$};

\filldraw[color=black!60, fill=black!5, very thick](4, 2) circle (0.02);
\filldraw[color=black!60, fill=black!5, very thick](4, 1.6) circle (0.02);

\node at (4, -0.5-1.30) {$\scriptstyle{V_{2i-1}}$};

\draw[black, very thick] (-0.6+5.5,-0.5-1) rectangle (0.6+5.5,3.5);

\draw[black] (-0.5+5.6,0.6) rectangle (0.5+5.4,3.1);
\node at (5.5, 3.3) {$\scriptstyle{W'}$};
\node[agent]      (w) at (5.5,2.75-1.75)      {$\scriptstyle{w}$};

\filldraw[color=black!60, fill=black!5, very thick](5.5, 2) circle (0.02);
\filldraw[color=black!60, fill=black!5, very thick](5.5, 1.6) circle (0.02);

\draw[black] (-0.5+5.6,-1) rectangle (0.5+5.4,0.5);
\node at (5.5, -1.22) {$\scriptstyle{\overline{W}}$};

\filldraw[color=black!60, fill=black!5, very thick](5.5, 0) circle (0.02);
\filldraw[color=black!60, fill=black!5, very thick](5.5, -0.4) circle (0.02);

\node at (5.5, -0.5-1.30) {$\scriptstyle{V_{2i}=W}$};

\draw[black, very thick] (-0.6+7,-0.5-1) rectangle (0.6+7,3.5);

\draw[black] (1+5.6,0.6) rectangle (2+5.4,3.1);
\node at (5.5+1.5, 3.3) {$\scriptstyle{U'}$};

\draw[black] (1+5.6,-1) rectangle (2+5.4,0.5);
\node at (7, -1.22) {$\scriptstyle{\overline{U}}$};

\node[agent]      (u) at (7,2.75-2.7)      {$\scriptstyle{u}$};

\filldraw[color=black!60, fill=black!5, very thick](7, 2) circle (0.02);
\filldraw[color=black!60, fill=black!5, very thick](7, 1.6) circle (0.02);

\node at (7, -0.5-1.30) {$\scriptstyle{V_{2i+1}=U}$};

\draw[black,->] (w)--(u);

\filldraw[color=black!60, fill=black, thick](1.5, 1.5) circle (0.01);
\filldraw[color=black!60, fill=black, thick](2, 1.5) circle (0.01);
\filldraw[color=black!60, fill=black, thick](2.5, 1.5) circle (0.01);
\end{tikzpicture}
        \caption{$W'$ has an outgoing edge to $\overline{U}$}
        \label{colorful-fig-1}
    \end{figure}
    
    Let $V(G) = V_1 \cup V_2 \cup \dots \cup V_{2i-1} \cup U_1 \cup U_2 \cup \dots \cup U_{k-2i+1}$. The only remaining case is that for all $j \in [k-2i+1]$, $\lvert \overline{U}_j\rvert = d-i$ and for all $j, \ell \in [k-2i+1]$, there is no edge from $U'_j$ to $\overline{U}_{\ell}$. This means that all the $d-i$ incoming edges of $\overline{U}_{\ell}$ come from $\overline{U}_{j}$. Hence all the $d-i$ outgoing edges of $\overline{U}_{j}$ go to $\overline{U}_{\ell}$. Therefore, the induced subgraph on $\overline{U}_1 \cup \overline{U}_2 \cup \dots \cup \overline{U}_{k-2i+1}$, forms a permutation graph. See Figure \ref{colorful-fig-2}. By Inequality \eqref{induc-assump}, we know $R_p(d-i) < 2d-2i-1$ and hence, $k-2i+1 < 2d-2i-1$. This is a contradiction with the assumption of the claim which requires $k \geq 2d-1$. Therefore, this case cannot occur.
    \begin{figure}
        \centering
        \begin{tikzpicture}
[scale=0.85,
 agent/.style={circle, draw=green!60, fill=green!5, very thick},
 good/.style={circle, draw=red!60, fill=red!5, very thick, minimum size=1pt},
]

\draw[black, very thick] (-0.6,-0.5-1) rectangle (0.6,3.5);

\draw[red] (-0.5+0.1,0.6) rectangle (0.5-0.1,3.1);
\node at (0, 3.3) {$\scriptstyle{U'_1}$};

\filldraw[color=black!60, fill=black!5, very thick](0, 2) circle (0.02);
\filldraw[color=black!60, fill=black!5, very thick](0, 1.6) circle (0.02);

\draw[blue] (-0.5+0.1,-1) rectangle (0.5-0.1,0.5);
\node at (0, -1.22) {$\scriptstyle{\overline{U}_1}$};

\filldraw[color=black!60, fill=black!5, very thick](0, 0) circle (0.02);
\filldraw[color=black!60, fill=black!5, very thick](0, -0.4) circle (0.02);

\draw[red, thick, ->]  (0,2)--(1.5, 1.6);
\draw[red, thick, ->]  (0,1.6)--(1.5, 2);

\draw[blue, thick, ->]  (0,0)--(1.5, -0.4);
\draw[blue, thick, ->]  (0,-0.4)--(1.5, 0);

\node at (0, -0.5-1.30) {$\scriptstyle{U_1}$};

\draw[black, very thick] (-0.6+1.5,-0.5-1) rectangle (0.6+1.5,3.5);

\draw[red] (1+0.1,0.6) rectangle (2-0.1,3.1);
\node at (0+1.5, 3.3) {$\scriptstyle{U'_2}$};


\filldraw[color=black!60, fill=black!5, very thick](1.5, 2) circle (0.02);
\filldraw[color=black!60, fill=black!5, very thick](1.5, 1.6) circle (0.02);

\draw[blue] (1+0.1,-1) rectangle (2-0.1,0.5);
\node at (1.5, -1.22) {$\scriptstyle{\overline{U}_2}$};

\filldraw[color=black!60, fill=black!5, very thick](1.5, 0) circle (0.02);
\filldraw[color=black!60, fill=black!5, very thick](1.5, -0.4) circle (0.02);

\node at (1.5, -0.5-1.30) {$\scriptstyle{U_2}$};

\draw[black, very thick] (-0.6+5.5,-0.5-1) rectangle (0.6+5.5,3.5);

\draw[red] (5+0.1,0.6) rectangle (6-0.1,3.1);
\node at (4+1.5, 3.3) {$\scriptstyle{U'_{k'}}$};


\filldraw[color=black!60, fill=black!5, very thick](5.5, 2) circle (0.02);
\filldraw[color=black!60, fill=black!5, very thick](5.5, 1.6) circle (0.02);

\draw[blue] (5+0.1,-1) rectangle (6-0.1,0.5);
\node at (5.5, -1.22) {$\scriptstyle{\overline{U}_{k'}}$};

\filldraw[color=black!60, fill=black!5, very thick](5.5, 0) circle (0.02);
\filldraw[color=black!60, fill=black!5, very thick](5.5, -0.4) circle (0.02);

\node at (5.5, -0.5-1.30) {$\scriptstyle{U_{k'}}$};

\draw[red, thick, ->]  (4.5,2)--(5.5, 2);
\draw[red, thick, ->]  (4.5,1.6)--(5.5, 1.6);

\draw[blue, thick, ->]  (4.5,0)--(5.5, 0);
\draw[blue, thick, ->]  (4.5,-0.4)--(5.5, -0.4);

\filldraw[color=black!60, fill=black, thick](2.75, 1.5) circle (0.01);
\filldraw[color=black!60, fill=black, thick](3.25, 1.5) circle (0.01);
\filldraw[color=black!60, fill=black, thick](3.75, 1.5) circle (0.01);
\end{tikzpicture}
        \caption{$k' \geq k-2i-1$ and for all $j, \ell \in [k']$, there exists no edge between $U'_j$ and $\overline{U}_{\ell}$.}
        \label{colorful-fig-2}
    \end{figure}
\end{proof}
Back to the assumption step, we want to prove $R_p(d)<2d-1$. Towards a contradiction, assume $R_p(d) \geq 2d-1$ and consider a graph $G$ with $\lvert R_p(d)\rvert$ parts satisfying properties of Definition \ref{rainbow-perm-def}. Now pick an arbitrary vertex $v$. By setting $i=d$ in Claim \ref{colorful-claim}, there exists a reindexing of the parts such that all $d$ nodes in part $V_{2d-1}$ are reachable from $v$ using $(2d-2)$-restricted paths. Let $u \in V_{2d-1}$ be the vertex with an outgoing edge to $v$. Then a $(2d-2)$-restricted path from $v$ to $u$ followed by the edge $u \rightarrow v$ forms a rainbow cycle. Hence, $R_p(d)<2d-1$.\label{permutation}

\bibliographystyle{alpha}
\bibliography{sample}

\appendix
\section{Appendix}

\paragraph{Non-Degenerate Instances~\cite{ChaudhuryGM20}.} We call an instance $I = \langle [n], M, \mathcal{V} \rangle$ non-degenerate if and only if no agent values two different sets equally, i.e., $\forall i \in [n]$ we have $v_i(S) \neq v_i(T)$ for all $S \neq T$. We extend the technique in~\cite{ChaudhuryGM20} and  show that it suffices to deal with non-degenerate instances when there are $n$ agents with general valuation functions, i.e., if there exists an EFX allocation in all non-degenerate instances, then there exists an EFX allocation in all instances. 

Let $M = \left\{g_1,g_2,\dots,g_m\right\}$. We perturb any instance $I$ to $I(\varepsilon) = \langle [n],M ,\mathcal{V}(\varepsilon) \rangle$, where for every $v_i \in \mathcal{V}$ we define $v'_i \in \mathcal{V}(\varepsilon)$, as

$$ v'_i(S) = v_i(S) + \varepsilon \cdot \sum_{g_j \in S} 2^{j} \quad \quad \forall S \subseteq M$$ 

\begin{lemma}
	\label{non-degeneracy-technical} \label{non-degeneracy-main}
	Let $\delta = 
\min_{i \in [n]} \min_{S,T \colon v_i(S) \neq v_i(T)} \abs{ v_i(S) - v_i(T)}$ 
	and let $\varepsilon > 0$ be such that $\varepsilon \cdot 2^{m+1}  < \delta$. Then
	\begin{enumerate}
		\item For any agent $i$ and $S,T \subseteq M$ such that $v_i(S) > v_i(T)$, we have $v'_i(S) > v'_i(T)$.
		\item $I(\varepsilon)$ is a non-degenerate instance. Furthermore, if $X = \langle X_1,X_2,X_3 \rangle$ is an EFX allocation for $I(\varepsilon)$ then $X$ is also an EFX allocation for $I$.
	\end{enumerate}
\end{lemma}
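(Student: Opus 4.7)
The lemma has two parts, both of which follow from a single observation about the perturbed valuations: writing $b(S) \assign \sum_{g_j \in S} 2^j$, we have $v'_i(S) = v_i(S) + \varepsilon \cdot b(S)$, and the map $S \mapsto b(S)$ is injective on $2^M$ because binary representations are unique. The plan is to first show that when $\varepsilon$ is small enough, the perturbation is too small to reverse any strict ordering by $v_i$, and then derive both the non-degeneracy and the transfer of EFX from that and from injectivity of $b(\cdot)$.

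For part 1, I would compute
\[
v'_i(S) - v'_i(T) = \bigl(v_i(S) - v_i(T)\bigr) + \varepsilon\bigl(b(S) - b(T)\bigr).
\]
If $v_i(S) > v_i(T)$, then by the definition of $\delta$ the first summand is at least $\delta$. The second summand is bounded in absolute value by $\varepsilon$ times $\max(b(S), b(T)) \le \sum_{j=1}^{m} 2^j = 2^{m+1} - 2 < 2^{m+1}$, hence by hypothesis by something strictly less than $\delta$. Therefore $v'_i(S) - v'_i(T) > 0$, proving part 1.

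For part 2, non-degeneracy of $I(\varepsilon)$ follows by a two-case analysis: if $v_i(S) \neq v_i(T)$, WLOG $v_i(S) > v_i(T)$, and part 1 gives $v'_i(S) > v'_i(T)$, in particular $v'_i(S) \neq v'_i(T)$; if $v_i(S) = v_i(T)$ but $S \neq T$, then by injectivity of $b(\cdot)$ we have $b(S) \neq b(T)$ and hence $v'_i(S) = v_i(S) + \varepsilon \cdot b(S) \neq v_i(T) + \varepsilon \cdot b(T) = v'_i(T)$. For the EFX transfer, I would argue by contrapositive: suppose $X$ is EFX for $I(\varepsilon)$ but not for $I$. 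Then there exist agents $i, k$ and a good $g \in X_k$ with $v_i(X_i) < v_i(X_k \setminus \{g\})$. Part 1 applied to the pair $(X_k \setminus \{g\},\, X_i)$ then yields $v'_i(X_k \setminus \{g\}) > v'_i(X_i)$, contradicting the EFX property of $X$ in $I(\varepsilon)$.

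There is no real obstacle; the only subtlety is making sure the bound on the perturbation term is a strict inequality, which is why the hypothesis is stated as $\varepsilon \cdot 2^{m+1} < \delta$ rather than $\le$, and the only nontrivial structural input is the injectivity of the binary signature $b(\cdot)$, which is what drives non-degeneracy in the case that $v_i$ already ties on two subsets.
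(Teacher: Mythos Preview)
Your proof is correct and follows essentially the same approach as the paper: the paper also computes $v'_i(S)-v'_i(T)$, bounds the perturbation term by $\varepsilon\cdot(2^{m+1}-1)<\delta$ to get part~1, handles non-degeneracy by the same two-case split (using uniqueness of binary expansions when $v_i(S)=v_i(T)$), and proves the EFX transfer by the identical contrapositive. The only cosmetic difference is that the paper writes the perturbation as $\varepsilon\bigl(\sum_{g_j\in S\setminus T}2^j-\sum_{g_j\in T\setminus S}2^j\bigr)$ rather than $\varepsilon(b(S)-b(T))$, but the bounds and logic are the same.
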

\begin{proof}
	For the first statement of the lemma, observe that 
	\begin{align*}
	v'_i(S) - v'_i(T)  &=  v_i(S) - v_i(T)  + \varepsilon(\sum_{g_j \in S \setminus T}2^j - \sum_{g_j \in T \setminus S}2^j) \\
	&\geq \delta -  \varepsilon \sum_{g_j \in T \setminus S}2^j\\
	&\geq \delta -  \varepsilon \cdot (2^{m+1}-1)\\
	&>0 \enspace .  
	\end{align*}
	
	For the second statement of the lemma, consider any two sets $S,T \subseteq M$ such that $S \neq T$. Now, for any $i \in [n]$, if $v_i(S) \neq v_i(T)$, we have $v'_i(S) \neq v'_i(T)$ by the first statement of the lemma. If $v_i(S) = v_i(T)$, we have $v'_i(S) - v'_i(T) = \varepsilon(\sum_{g_j \in S \setminus T}2^j - \sum_{g_j \in T \setminus S}2^j) \neq 0$ (as $S \neq T$). Therefore, $I(\varepsilon)$ is non-degenerate.
	
	For the final claim, let us assume that $X$ is an EFX allocation in $I(\varepsilon)$ and not an EFX allocation in $I$. Then there exist $i,j$, and $g \in X_j$ such that $v_i(X_j \setminus g) > v_i(X_i)$. In that case, we have $v'_i(X_j \setminus g) > v'_i(X_i)$ by the first statement of the lemma, implying that $X$ is not an EFX allocation in $I(\varepsilon)$ as well, which is a contradiction. 
\end{proof}

\end{document}